\PassOptionsToPackage{table}{xcolor}
\documentclass[]{fairmeta}

\title{TextSeal2: Distortion-Free LLM Watermarking\\ with Copula Mixing and Memoryless Races}

\author[\star]{Tom Sander}
\author[\star]{Pierre Fernandez}
\author{Itai Gat}
\author{Haiming Bao}
\author{Jiaxin Shi}
\author{Jason Park}
\author{Valeriu Lacatusu}
\author{Sylvestre-Alvise Rebuffi}
\author{Hongyan Chang}
\author{Hady Elsahar}
\author{Tom\'{a}\v{s} Sou\v{c}ek}
\author{Tuan Tran}
\author{Surya Parimi}
\author{Alexandre Mourachko}
\affiliation{Meta Superintelligence Labs, \quad $^\star$Equal contributors}

\abstract{
We present \tsii, a distortion-free watermark for large language models that \revdel{is preserved}\rev{keeps its signal} under speculative decoding: tokens that replace rejected drafts remain watermarked with the same keys, and the acceptance rate is unchanged. It also preserves output diversity and adds no measurable inference latency.
\revdel{Previous work introduces diversity into the Gumbel-max watermarking scheme}\rev{Its predecessor, TextSeal, restores diversity in the Gumbel-max scheme} by routing each token to one of two keys via a coin flip. This is suboptimal because the detector averages the scores from both keys without knowing which key was used.
\tsii instead merges \revdel{the }two or more keys into a single virtual key and couples it to fresh noise through AutoMark's Gaussian copula. Any number of keys, including publicly shared keys, can be merged this way without diluting detection.
Under speculative decoding, \revdel{the drafted tokens come from keyed exponential races, and one more exponential race is used for verification.}\rev{\tsii verifies keyed drafts with one more exponential race, run on the leftovers of the draft races.}
\revdel{Because the exponential distribution is memoryless, the distribution of output tokens is that of the target model.}\rev{By memorylessness, this race reproduces block verification, so the output follows the target model exactly.}
We evaluate \tsii on three LLMs. At equal output token diversity, \tsii detects 8 to 22 percentage points more 100-token answers than TextSeal at a false-positive rate of $10^{-3}$, and is comparable to textGrain while requiring no post-logits optimization. 
Under speculative decoding, it detects 18 and 22 percentage points more 200-token answers than TextSeal and textGrain, respectively, at the same acceptance rate.

}

\correspondence{\email{tomsander@meta.com}, \email{pfz@meta.com}}
\metadata[Code]{\url{https://github.com/facebookresearch/textseal}}

\usepackage[utf8]{inputenc}
\usepackage[T1]{fontenc}
\usepackage{url}
\usepackage{amsmath, amssymb, amsthm}
\usepackage{algorithm, algpseudocode}
\usepackage{booktabs}
\usepackage{graphicx}
\usepackage{caption}

\usepackage{xcolor}
\usepackage{colortbl}
\usepackage{tikz}
\usetikzlibrary{arrows.meta,positioning,decorations.pathreplacing,backgrounds,shapes.geometric,fit,calc}
\definecolor{tsred}{HTML}{D1242F}
\usepackage{pifont}
\usepackage{xspace}
\usepackage{enumitem}
\usepackage{needspace}
\usepackage{multirow, makecell}

\newcommand{\cmark}{\textcolor{blue}{\ding{51}}\xspace}
\newcommand{\xmark}{\textcolor{red}{\ding{55}}\xspace}

\DeclareMathOperator*{\argmax}{arg\,max}
\DeclareMathOperator*{\argmin}{arg\,min}
\newcommand{\E}{\mathbb{E}}
\newcommand{\Prob}{\mathbb{P}}
\newcommand{\V}{\mathcal{V}}
\newcommand{\Hnull}{\mathcal{H}_0}
\newcommand{\TV}{\mathrm{TV}}
\newcommand{\Exp}{\mathrm{Exp}}
\newcommand{\Unif}{\mathrm{U}(0,1)}
\newcommand{\sk}{K}
\newcommand{\ts}{TextSeal\xspace}
\newcommand{\tsii}{TextSeal2\xspace}

\newtheorem{theorem}{Theorem}
\newtheorem{proposition}{Proposition}
\newtheorem{lemma}{Lemma}

\theoremstyle{definition}

\newtheorem{remark}{Remark}

\newif\ifshowrev %
\ifshowrev
  \newcommand{\rev}[1]{{\color{teal}#1}}
  \newcommand{\revdel}[1]{{\color{red}[\textbf{removed:} #1]}}
\else
  \newcommand{\rev}[1]{#1}
  \newcommand{\revdel}[1]{}
\fi

\begin{document}
\maketitle

\section{Introduction}
\label{sec:intro}

Identifying AI-generated text helps prevent misuse, from fabricated news~\citep{zellers2019defending,crothers2022machine} to model distillation by malicious actors~\citep{sander2024watermarking}.
An LLM watermark embeds a keyed statistical signal at generation time: each token is sampled from the model's next-token distribution through a coupling with pseudo-random values derived from a secret key and the preceding tokens.
The detector recomputes these values from the key and the text and tests whether the tokens are statistically aligned with them.
The EU AI Act requires providers to mark AI-generated content~\citep{EuropeanAIAct}, and watermarking is deployed at scale: SynthID-Text runs in Gemini~\citep{dathathri2024scalable}, Anthropic marks Claude outputs~\citep{anthropic2026marking}, OpenAI deploys textGrain~\citep{li2026textgrain}, and we released TextSeal~\citep{sander2026textseal}.

We argue that a deployed watermark should meet four requirements: it should 
\textit{(i)} preserve the model's output distribution; \textit{(ii)} keep answers diverse under a fixed watermark key;
\textit{(iii)} add minimal latency to decoding; 
and \textit{(iv)} preserve the acceptance rate of speculative decoding~\citep{leviathan2023fast,chen2023accelerating,sun2025block}.
Schemes whose next-token distribution, averaged over all possible keys, equals the model's are called \emph{distortion-free}.
The earliest such scheme applies the Gumbel-max trick to keyed pseudo-random values~\citep{aaronson2023watermarking}; later ones achieve the same property with other samplers~\citep{christ2023undetectable,hu2023unbiased,wu2023dipmark,kuditipudi2023robust}.
Distortionary schemes such as the green list~\citep{kirchenbauer2023watermark} instead shift token probabilities, trading quality for detectability; we focus on distortion-free schemes.

\begin{table}[t!]
\centering
\footnotesize
\caption{Detection on Qwen3.5-9B (ELI5, temperature 1, top-$p$ 0.95), at fixed diversity: \% of 200-token answers detected at a false-positive rate (FPR) of $10^{-3}$.
Gray: half-width of the 95\% bootstrap interval; bold: best; underlined: second best, per column.
Speculative decoding: Qwen3.5-0.8B drafts 8 tokens per round, with block verification; \tsii uses the residual race, the baselines use block verification's native stopping rule with the last token of each round drawn under an independent key, and textGrain's own recipe verifies the watermarked distributions.
Acceptance is kept when tokens per round do not drop below those of unwatermarked block verification (2.74; paired over 480 prompts, \autoref{fig:ablation} in \autoref{app:details}).
$N_{\mathrm{it}}$: Sinkhorn iterations on a $B \times m$ table (1--3\,ms per token, \autoref{app:latency}); $L$: tournament layers over candidates $\mathcal A$.
\tsii equals AutoMark's copula in distribution; their gap under speculative decoding is the residual race.}
\label{tab:main}
\newcommand{\pmci}[1]{{\color{black!50}\tiny$\pm$#1}}
\setlength{\tabcolsep}{5pt}
\setlength{\aboverulesep}{0pt}
\setlength{\belowrulesep}{0pt}
\renewcommand{\arraystretch}{1.35}
\resizebox{\linewidth}{!}{\begin{tabular}{lw{c}{1.5cm}w{c}{1.5cm}cc}
\toprule
\multirow{2}{*}{Method} & \multicolumn{2}{c}{Speculative decoding} & \multirow{2}{*}{Acceptance kept} & \multirow{2}{*}{Extra post-logits work} \\
\cmidrule(lr){2-3}
& \ding{55} & \ding{51} & & \\
\midrule
SynthID-Text (tournament) & \underline{89}\pmci{3} & 50\pmci{2} & \cmark & $O(L|\mathcal A|)$ \\
TextSeal (coin flip) & 81\pmci{1} & 71\pmci{2} & \cmark & none \\
AutoMark copula & \textbf{91}\pmci{3} & 60\pmci{2} & \cmark & none \\
textGrain & \underline{89}\pmci{4} & 67\pmci{4} & \cmark & Sinkhorn, $O(N_{\mathrm{it}} B m)$ \\
textGrain, own speculative-decoding recipe & -- & \underline{84}\pmci{6} & \xmark{} ($-5\%$ tokens/round) & Sinkhorn, $O(N_{\mathrm{it}} B m)$ \\
\rowcolor{metablue!8}
\textbf{\tsii} (ours) & \textbf{91}\pmci{2} & \textbf{89}\pmci{1} & \cmark & none \\
\bottomrule
\end{tabular}}
\end{table}

\begin{figure}[b!]
\centering
\includegraphics[width=\linewidth]{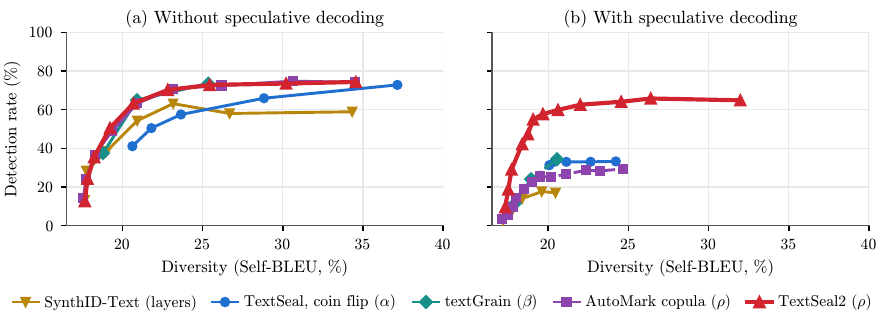}
\caption{
    Diversity--detection trade-off on Qwen3.5-9B at 100 tokens at FPR=$10^{-3}$, (a) without and (b) with speculative decoding (8 drafts per round, block verification). 
    In (a), \tsii coincides with AutoMark's single-key copula as expected since it behaves like a single-key copula. 
    In (b), \tsii uses the residual race, and the baselines use block verification's native stopping rule with the last token of each round drawn under an independent key.
    Diversity is measured by Self-BLEU (lower is more diverse).
}
\label{fig:teaser}
\end{figure}

Every distortion-free watermark couples the sampled token to key-derived randomness~\citep{li2025statistical,xie2025debiasing,long2025optimized,tsur2025heavywater,he2025adaptive,gloaguen2026automark}.
If the key fully determines the token, detection is strongest but repeated generations with the same context never vary; if the token is independent of the key, sampling is unchanged but carries no signal.
Put differently, watermarking removes a fraction of each token's entropy, which is the budget formulation of textGrain~\citep{li2026textgrain}.
Schemes differ in how they trade diversity against detectability.
TextSeal routes each token to one of two keys by a coin flip~\citep{sander2026textseal}; textGrain samples from a regularized optimal-transport coupling solved at every decoding step~\citep{li2026textgrain}; SynthID-Text selects among randomly sampled candidates by tournament~\citep{dathathri2024scalable}; and AutoMark blends its Gumbel values with freshly drawn noise through a Gaussian copula~\citep{gloaguen2026automark}.
Other couplings rely on reweighting~\citep{hu2023unbiased,wu2023dipmark}, inverse-transform sampling~\citep{kuditipudi2023robust}, or maximal couplings~\citep{xie2025debiasing,long2025optimized,tsur2025heavywater}.
Dual-key constructions such as TextSeal also support tampering detection~\citep{feng2026tracing} and public detection, with one key public and the other private~\citep{milis2026public}.
Among these, textGrain detects the most watermarked text at a given diversity, but its optimal transport adds post-logits latency and its speculative variant lowers acceptance, violating our requirements \textit{(iii)} and \textit{(iv)} presented above.

We present \tsii, which retains TextSeal's two keys, meets all four requirements, and matches the state of the art in detectability.
Our starting point is a diagnosis of the gap between TextSeal, textGrain, and AutoMark.
In \ts, each token is generated under one key, but the detector does not know which key, and must average the scores from both keys. 
This halves the signal and reduces the signal-to-noise ratio by $\sqrt 2$.
AutoMark and textGrain do not suffer from this loss because they use a single key, which the detector scores on every token, and draw their diversity from fresh randomness.

\tsii makes three contributions to reach the state of the art (\autoref{tab:main}, \autoref{fig:teaser}):
\begin{itemize}[leftmargin=*,itemsep=1pt,topsep=2pt]
\item \textbf{Copula mixing} (\autoref{subsec:copula}).
We merge the two keys' Gaussian values into a single virtual key and apply AutoMark's copula on it.
Each key retains its role (e.g., one can be published to allow for public detection), and additional keys can be merged without diluting detection.
\item \textbf{The residual race for speculative decoding} (\autoref{subsec:residual}).
Gumbel-max sampling, used for the keyed drafts, is an exponential race.
By memorylessness, the losers' leftovers are fresh exponentials.
We use this property to run one more race on them, and with block verification~\citep{sun2025block}, we decide the first draft token to be rejected, and the bonus token.
The output is exactly distortion-free, acceptance is that of block verification (\autoref{prop:residual}), and detection gains 28 to 33 percentage points over drawing the last token of each round under an independent key.
\item \textbf{A calibrated strength parameter} (\autoref{subsec:entropy}).
The coupling strength $\rho$ plays the role of textGrain's entropy budget $\beta$: the entropy it removes \rev{from a token of positive entropy }increases strictly with $\rho$ (\autoref{thm:monotone}).
No distortion-free scheme can obtain more detection evidence per token than the entropy it removes, and the copula comes close to this bound (\autoref{prop:kl}).
\end{itemize}

\section{Background \& Related work}
\label{sec:background}

\paragraph{Gumbel-max watermarking~\citep{aaronson2023watermarking}.}
At step $t$, the model outputs a distribution $\vec{p}^{(t)} = (p_v^{(t)})_{v \in \V}$ over the vocabulary $\V$, after temperature scaling and top-$p$ truncation (we drop the superscript $t$ when no confusion arises).
A pseudo-random function (PRF) seeded by the secret key $\sk$ and the watermark context window $\mathbf w = (x_{t-k}, \dots, x_{t-1})$ -- the last $k$ tokens -- assigns to each candidate token $v$ a value $R_v = \mathrm{PRF}(v, \mathbf w, \sk) \in [0,1]$, and the watermarked sampler emits:
\begin{equation}
\label{eq:race}
x_t = \argmax_{v \in \V} \; R_v^{1/p_v} = \argmin_{v \in \V} \; \frac{E_v}{p_v}, \qquad E_v = -\ln R_v .
\end{equation}
The right-hand form is an \textit{exponential race}: if the $R_v$ are i.i.d.\ uniform, then the $E_v$ are i.i.d.\ $\Exp(1)$, each $E_v/p_v$ is $\Exp(p_v)$, and the $\argmin$ is distributed exactly according to $\vec p$.
In this race, candidate $v$ runs a lane of length $E_v$ at speed $p_v$, and the first to finish is emitted (\autoref{fig:generation}).
For detection, each observed token $x_i$ receives the score $s_i = -\ln(1 - R_{x_i})$, where $R_{x_i}$ is recomputed from the key and the token's window.
Under the null hypothesis $\Hnull$ (unwatermarked text, or the wrong key), the scores are $\Exp(1)$ and their sum follows a Gamma distribution, giving a simple sum test; watermarked text, whose tokens tend to win their races with unusually large $R$, produces stochastically larger scores.

\paragraph{Repeated windows.}
A sampler is \emph{single-token} distortion-free if, in expectation over the key, every token retains its model probability given the prefix~\citep{dathathri2024scalable}.
It is \emph{single-sequence} distortion-free if the distribution of the entire answer is likewise preserved.
Windows repeat often in code and long reasoning traces.
Reusing a window's keyed values re-runs the same race, so the text is single-token but not single-sequence distortion-free.
Repeated context masking draws fresh values at the later occurrences of a window, i.e.\ plain sampling there~\citep{dathathri2024scalable}.
AutoMark's \texttt{ResidualRace} instead continues the window's race from its leftovers~\citep{gloaguen2026automark}.
Both make the text single-sequence distortion-free (Appendix~\ref{app:proofs-windows}).
At detection, each (window, token) pair is scored once, to avoid biasing the detection test~\citep{fernandez2023three}.

\paragraph{\ts~\citep{sander2026textseal}.}
Plain Gumbel-max is deterministic given the key, so \ts randomises the race \eqref{eq:race} with two keys $k^{(1)}$ and $k^{(2)}$, producing values $R_v^{(1)}$ and $R_v^{(2)}$: at each step it uses $k^{(1)}$ with probability $1-\alpha$ and $k^{(2)}$ with probability $\alpha$.
The detector does not know which key generated a given token, so it fuses the two per-key scores, $s_i = (1-\alpha)\, s_i^{(1)} + \alpha\, s_i^{(2)}$ with $s_i^{(j)} = -\ln(1 - R_i^{(j)})$.
Optionally, each score is weighted by the entropy of the next-token distribution under an auxiliary language model: low-entropy tokens are nearly forced by the context and carry little watermark signal, so down-weighting them improves detection; the weighted sum is then tested against a Gamma distribution with matched mean and variance.
The two keys also serve speculative decoding and public detection.
Under speculative decoding, \ts routes each draft to one of its two keys as usual and generates a replacement token with the other key.
For public detection, $k_{\mathrm{pub}} = k^{(1)}$ is published so anyone can run a weaker test, while $k_{\mathrm{priv}} = k^{(2)}$ stays secret for full verification~\citep{milis2026public}.

\paragraph{Gaussian copula.}
AutoMark~\citep{gloaguen2026automark} randomises differently: it draws independent noise $\eta_v \sim \mathcal N(0,1)$ at every position and replaces $R_v$ by
\begin{equation}
\label{eq:copula}
\tilde R_v = \Phi\big(\rho\, \Phi^{-1}(R_v) + \sqrt{1-\rho^2}\; \eta_v\big),
\end{equation}
where $\Phi$ is the standard normal CDF. The parameter $\rho \in [0,1]$ interpolates between unwatermarked sampling ($\rho = 0$, no dependence on the key) and plain Gumbel-max ($\rho = 1$, deterministic given the key), while $\tilde R_v$ remains uniform -- hence the race distortion-free -- for every $\rho$.

\paragraph{textGrain~\citep{li2026textgrain}.}
At each decoding step, it uses the key and the window to partition the vocabulary into $B$ blocks and to draw a $B \times m$ table of keyed Gumbel costs. Sinkhorn iterations then solve an entropy-regularised optimal-transport problem between the block probabilities and $m$ columns, calibrated so that the key removes a prescribed fraction $\beta$ of the token's entropy; a keyed column is selected, and the block and token are sampled from the resulting coupling with freshly drawn randomness.
Only the key-only computations can overlap the model's forward pass: the Sinkhorn iterations require the next-token distribution, and the following step requires the sampled token, so this work lies on the critical path of decoding. Better kernels can shorten it but not remove it (1--3\,ms per token in our implementation, 0.2--0.4\,ms in a minimal version; \autoref{app:latency}).

\paragraph{Speculative decoding.}
Write $\vec p = P_T(\cdot \mid x_{<t})$ for the target model's distribution and $\vec q = P_D(\cdot \mid x_{<t})$ for the (smaller) draft model's.
In each round, the drafter proposes a block of tokens, and the target evaluates the whole block in one forward pass.
Token-level verification tests the drafts in order.
A draft token $x \sim \vec q$ is accepted with probability $\min(1, p_x/q_x)$, the \emph{raw acceptance test} on the unwatermarked distributions $\vec p$ and $\vec q$.
At the first rejection, the target emits a replacement token from the residual distribution:
\begin{equation}
\label{eq:residual}
r_v = \frac{(p_v - q_v)_+}{\TV(\vec p,\vec q)}, \qquad \TV(\vec p,\vec q) = \sum_v (p_v - q_v)_+ ,
\end{equation}
and discards the later drafts; if every draft is accepted, it emits a bonus token from its next distribution.
The output follows $\vec p$ at acceptance rate $1 - \TV(\vec p, \vec q)$~\citep{leviathan2023fast,chen2023accelerating}.
Block verification~\citep{sun2025block} instead decides on the whole block jointly, so it can keep a longer prefix even when a shorter one fails its per-token test (closed forms in Appendix~\ref{app:proofs-block}).
\citet{sun2025block} prove that repeating such rounds also generates the target's law, and that after any number of rounds block verification has decoded at least as many tokens in expectation as any other verification procedure with this guarantee, so it is never worse than token-level verification.
They report wall-clock speedups of 5--8\% over token-level verification across tasks and datasets.

\section{The \texorpdfstring{\tsii}{TextSeal2} watermark}
\label{sec:method}

\begin{figure}[b!]
\centering
\includegraphics[width=\linewidth,page=1]{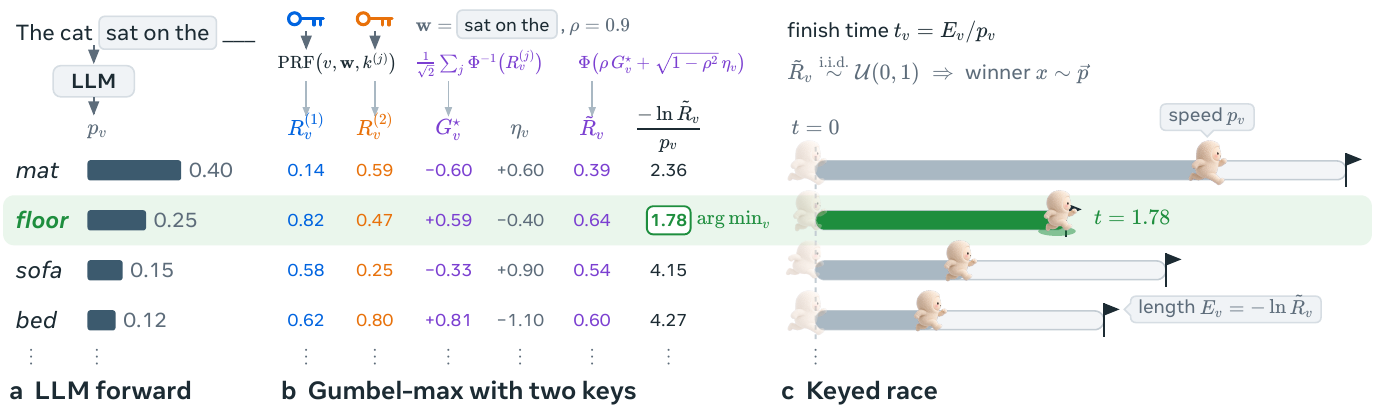}
\caption{\tsii generates one token (\autoref{subsec:copula}).
(a)~The model's forward pass gives the probabilities $p_v$, and the last $k = 3$ tokens form the window $\mathbf w$.
(b)~The keys' values at $\mathbf w$ are merged into the virtual key $G^\star_v$, which the copula \eqref{eq:ts2} blends with the noise $\eta_v$ into $\tilde R_v$; the Gumbel-max choice is $\argmin_v (-\ln \tilde R_v)/p_v$.
(c)~The Gumbel-max choice can be seen as a race \eqref{eq:race}: candidate $v$ runs a lane of length $E_v = -\ln \tilde R_v$ at speed $p_v$, and the token that finishes first is chosen.}
\label{fig:generation}
\end{figure}

\tsii retains \ts's two keys, Gumbel-max sampling \eqref{eq:race}, and detector.
It changes how the keys enter the sampler (\autoref{subsec:copula}), how speculative decoding chooses the tokens it emits (\autoref{subsec:residual}), and how the strength is calibrated (\autoref{subsec:entropy}).
It handles repeated windows as in \autoref{sec:background}, and under speculative decoding every window raced in a round counts as used, whether or not its position is kept (Appendix~\ref{app:proofs-windows}).
\autoref{fig:generation} shows one generation step, \autoref{fig:specdec} one round of speculative decoding, and \autoref{fig:detection} the detection of one text; \autoref{alg:sampler} is the pseudocode.

\subsection{Copula mixing of the two keys}
\label{subsec:copula}

For each key $j \in \lbrace 1, 2\rbrace$, map its values to Gaussian space by $G_v^{(j)} = \Phi^{-1}\big(R_v^{(j)}\big)$, which is standard normal when $R_v^{(j)}$ is uniform.
Instead of running the exponential race on $\vec R^{(1)}$ or $\vec R^{(2)}$, chosen at random as in \ts, \tsii runs the race \eqref{eq:race} on the merged values
\begin{equation}
\label{eq:ts2}
\tilde R_v = \Phi\big(\rho\, G_v^\star + \sqrt{1-\rho^2}\;\eta_v\big), \qquad \text{where } G_v^\star = \frac{G_v^{(1)} + G_v^{(2)}}{\sqrt 2} \text{ and } \eta_v \overset{\text{i.i.d.}}{\sim} \mathcal N(0,1).
\end{equation}
Since the two keys are independent, $G_v^\star$ is standard normal; it depends only on $\left(v, \mathbf w, k^{(1)}, k^{(2)} \right)$ and is independent across candidate tokens $v$ (and across positions whose window has not been used before; see \autoref{sec:background}). Equation~\eqref{eq:ts2} is therefore exactly AutoMark's copula \eqref{eq:copula} applied to a \emph{virtual key} with values $\Phi(G^\star_v)$. In particular, the $\tilde R_v$ are i.i.d.\ uniform and the race is distortion-free for every $\rho$ (\autoref{lem:copula}).
Detection uses the matched score $s_i^\star = -\ln \Phi\big(-G^\star_{x_i}\big)$ -- the usual score $-\ln(1 - R)$ of the virtual key -- which is $\Exp(1)$-distributed under $\Hnull$. The two-key scheme thus has exactly the diversity and detection power of AutoMark's single-key copula at the same $\rho$.
The construction extends to $M$ keys: $G^\star = M^{-1/2}\sum_{k=1}^M G^{(k)}$ remains a single standard normal per candidate, so the full detector loses no signal as keys are added, e.g.\ one publishable key per verifier. (Each key's correlation with the race input, and hence a single-key test's power, shrinks as $\rho/\sqrt M$.)

\paragraph{Why the coin flip loses signal.}
With the coin flip at $\alpha = 1/2$, each token is generated under one key, but the detector does not know which and must score $\frac12 s^{(1)} + \frac12 s^{(2)}$.
Only the key that generated the token carries signal, so averaging halves the expected excess of the score over its null mean of 1.
The two scores are independent under the null, so averaging also halves the variance -- hence the standard deviation falls only by $\sqrt 2$.
The signal-to-noise ratio is therefore reduced by a factor $\sqrt 2$.
Copula mixing has no such loss: every token is generated from $G^\star$, which the detector scores.

\paragraph{Optional public detection.} By default, detection uses both keys through $s_i^\star$. $k_{\mathrm{pub}} = k^{(1)}$ can be published for anyone to compute single-key scores $s_i^{(1)} = -\ln\big(1 - R^{(1)}_{x_i}\big)$. This test is weaker: $G^{(1)}$ has correlation only $\rho/\sqrt 2$ with the race input $\rho G^\star + \sqrt{1-\rho^2}\,\eta$, while $k_{\mathrm{priv}} = k^{(2)}$ supports full-strength verification and exposes manipulations of the public signal~\citep{milis2026public}. SAC-Copula~\citep{li2026saccopula} and AutoMark's Gaussian direction mix keyed Gaussians, without a public key.

\subsection{Speculative decoding through memoryless races}
\label{subsec:residual}

\begin{figure}[b!]
\centering
\includegraphics[width=\linewidth,page=2]{figures/textseal_figures.pdf}
\caption{Block verification checks the drafts against the target as a block, keeps those before a stop position~$I$, and emits one token $y$ at $I$.
(a)~The drafter proposes ``the mat'' with keyed races.
(b)~One target pass then gives the probability $\pi_{i,v}$ that $I = i$ and $y = v$.
One more race on the leftovers picks $I$ and $y$: ``floor'' wins, so the round keeps ``the'' and emits ``floor'' at $I = 2$.
\tsii preserves the target's distribution and block verification's acceptance rate (\autoref{prop:residual}).
\autoref{alg:sampler} in the appendix gives more details.}
\label{fig:specdec}
\end{figure}

Speculative decoding poses a specific problem for watermarking: the token that verification emits after the kept drafts must follow a residual distribution, such as \eqref{eq:residual}, and must itself carry the watermark.
Prior approaches each give up something.
Re-running the race with weights $\vec r$ on the same keyed values distorts the output, because the draft's win shifts the remaining values (\autoref{rem:naive}).
Replacing from fresh noise~\citep{hu2024inevitable} or from a second key~\citep{sander2026textseal,dathathri2024scalable} keeps the acceptance rate, but the replaced tokens, about a third of the text in our experiments, then lose the watermark or split it across keys.
Verifying between the watermarked distributions keeps the watermark but accepts fewer drafts~\citep{hu2024inevitable,li2026textgrain,dathathri2024scalable} (\autoref{prop:accept}).
\citet{he2026improving} let the detector guess which key made each token, and \citet{liu2026poisson} use several drafts per position.
\tsii keeps both the keys and the acceptance of block verification, with two races on the same keyed values.

\paragraph{Keyed drafts and their leftovers.}
A round has $N$ drafts at positions $i = 1, \dots, N$ and a bonus position $N+1$, with drafter laws $\vec q_i$ and target laws $\vec p_i$ from one target pass.
Position $i$ reads the lane lengths $E_{i,v} = -\ln \tilde R_v(\mathbf w_i)$ from \eqref{eq:ts2} at its window $\mathbf w_i$, which ends with $d_{i-1}$ for $i \ge 2$, and the drafter runs the race at its own speeds:
\begin{equation}
\label{eq:draft}
d_i = \argmin_v \frac{E_{i,v}}{q_{i,v}}, \qquad \tau_i = \frac{E_{i,d_i}}{q_{i,d_i}}, \qquad i = 1, \dots, N.
\end{equation}
When $d_i$ finishes, lane $v$ has covered the distance $\tau_i q_{i,v}$, and $E_{i,v} - \tau_i q_{i,v}$ is its leftover.
By memorylessness, the leftovers are fresh (proof in Appendix~\ref{app:proofs-races}):

\begin{lemma}[Leftovers of an exponential race; AutoMark]
\label{lem:race}
If the $E_v$ are i.i.d.\ $\Exp(1)$\rev{ and $\vec q$ is a probability vector}, the race $x = \argmin_v E_v/q_v$, with time $\tau$, emits $v$ with probability $q_v$, and given $(x, \tau)$ the leftovers $E_v - \tau q_v$, $v \neq x$, are again i.i.d.\ $\Exp(1)$.
\end{lemma}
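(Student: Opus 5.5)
We compute the joint law of the winner $x$, the time $\tau$, and the leftovers directly from the product density of the $E_v$, and then read off the factorization. Write $F_v = E_v/q_v$; these are independent with $F_v \sim \Exp(q_v)$, i.e.\ density $q_v e^{-q_v f}$ on $f \ge 0$. We restrict to the support $\lbrace v : q_v > 0\rbrace$, since lanes with $q_v = 0$ never finish and their "leftover" $E_v - \tau\cdot 0 = E_v$ is trivially a fresh $\Exp(1)$ independent of everything else. The paper's convention $q_v>0$ for active candidates is implicit in the race \eqref{eq:race}.

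\textbf{Step 1: joint law of $(x, \tau, (E_v - \tau q_v)_{v\ne x})$.} Fix a candidate $u$, a time $t \ge 0$, and nonnegative thresholds $(a_v)_{v \ne u}$. The event $\lbrace x = u,\ \tau \in dt,\ E_v - t q_v > a_v \ \forall v \ne u\rbrace$ is the event that $E_u/q_u \in dt$ and $E_v > t q_v + a_v$ for each $v \ne u$. The condition $E_v > tq_v + a_v \ge tq_v$ already forces $F_v > t$, so $u$ is indeed the argmin; ties have probability zero. By independence, its probability is
\[
q_u e^{-q_u t}\,dt \prod_{v \ne u} e^{-(t q_v + a_v)} = q_u\, e^{-t\sum_v q_v}\,dt \prod_{v\ne u} e^{-a_v} = q_u\, e^{-t}\,dt \prod_{v \ne u} e^{-a_v},
\]
using $\sum_v q_v = 1$. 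This is the only place where $\vec q$ being a probability vector enters.

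\textbf{Step 2: read off the factorization.} The expression is a product of three factors: $q_u$ in $u$, $e^{-t}\,dt$ in $t$, and $\prod_{v\ne u} e^{-a_v}$ in the thresholds. Setting all $a_v = 0$ and integrating over $t$ gives $\Prob(x = u) = q_u$. Dividing the full expression by the $a=0$ version shows that, conditionally on $(x,\tau) = (u,t)$, the leftovers have joint survival function $\prod_{v\ne u} e^{-a_v}$. That is exactly i.i.d.\ $\Exp(1)$, independent of $(x, \tau)$. As a by-product, $\tau \sim \Exp(1)$ independently of $x$.

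\textbf{Main obstacle.} The only real care point is measure-theoretic: conditioning on the continuous variable $\tau$. We handle it by working with the joint density in $t$ times survival probabilities, as above, rather than conditioning on null events. The rest is the one-line product computation. Support issues ($q_v = 0$) and ties are handled by the remarks above.
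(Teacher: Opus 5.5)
Your proof is correct and follows the paper's own argument essentially step for step. Like the paper, you fix a winner, a time and thresholds, write the probability as the winner's density times the losers' tails, and use $\sum_v q_v = 1$ to factorise it into $q_u \cdot e^{-t}\,\mathrm dt \cdot \prod_{v \ne u} e^{-a_v}$, reading off the three laws from that product and setting aside lanes with $q_v = 0$ in the same way.
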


\paragraph{Block verification and the relay race.}
Given the drafts, block verification (\autoref{sec:background}) stops at a position $I$, keeps $d_{<I} = (d_1, \dots, d_{I-1})$, and emits a token $y$ at position $I$, with law $\pi_{i,v} = \Prob(I = i)\, r_{i,v}$.
Here $\vec r_i$ is a residual law at position $i$, with $\vec r_{N+1} = \vec p_{N+1}$, and $\Prob(I = i)$ depends on all the drafts, since a later success can override an earlier failure; Appendix~\ref{app:proofs-block} gives both in closed form.
\tsii draws this outcome with one more race, on the leftovers of all positions:
\begin{equation}
\label{eq:relay}
(I, y) = \argmin_{(i, v):\; \pi_{i,v} > 0} \; \frac{E_{i,v} - \tau_i\, q_{i,v}}{\pi_{i,v}},
\end{equation}
with $\tau_{N+1} q_{N+1,v} = 0$ at the bonus position, which has no draft race.
Each lane resumes where its draft race left it and runs its leftover at speed $\pi_{i,v}$, as in a relay (\autoref{fig:specdec}), and the round emits $d_{<I}\, y$ without drawing an accept coin.
We call \eqref{eq:relay} the \emph{residual race}: \revdel{``J'' stands for joint, because a single race picks the stopping position and the last token together, and ``residual'' for the leftovers, or residuals, of the draft races on which it runs.}\rev{a single race on the leftovers of the draft races picks the stopping position and the last token together.}
Kept drafts carry the signal of their draft races and the last token that of the relay, so the detector is unchanged: it sums the scores and does not need to locate the ends of rounds.
AutoMark's \texttt{ResidualRace} continues the race of a repeated window, and exponential races\rev{~\citep{li2018poisson}} also couple drafter and target in unwatermarked speculative sampling\revdel{~\citep{kobus2025exponential,daliri2024coupling,li2018poisson}}\rev{~\citep{kobus2025exponential,daliri2024coupling}}; to our knowledge, racing block verification's outcome on the leftovers of keyed drafts is new.

\begin{proposition}[Speculative decoding with the residual race]
\label{prop:residual}
Suppose that every round reads i.i.d.\ uniform keyed values at its windows, independent across positions and of earlier rounds, and independent copula noise.
(i) Given the drafts, $(I, y)$ has law $\vec \pi$, so the number of tokens per round, and hence acceptance, is that of block verification.
(ii) The generated text follows the target's law.
(iii) The window rule (Appendix~\ref{app:proofs-windows}) meets this condition: the text is distortion-free over keys.
\end{proposition}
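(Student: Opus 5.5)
The plan is to reduce everything to \autoref{lem:race} applied position by position, and then to the standard property of an exponential race over a finite index set.

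\textbf{Step 1 (leftovers are fresh).} Condition on the drafts $d_1,\dots,d_N$ and their times $\tau_1,\dots,\tau_N$. Under the hypothesis, the lane lengths $E_{i,v}=-\ln\tilde R_v(\mathbf w_i)$ are i.i.d.\ $\Exp(1)$ across $v$ at each position. This uses \autoref{lem:copula}: the $\tilde R_v$ are i.i.d.\ uniform given independent keyed values and copula noise. The lengths are also independent across positions and of earlier rounds.

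The drafts are generated sequentially. The window $\mathbf w_i$ depends on $d_{i-1}$, but the values read at $\mathbf w_i$ are fresh by assumption. So the joint law factorises over positions once we condition on the window contents.

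Applying \autoref{lem:race} at each $i\le N$ with $\vec q=\vec q_i$ gives two facts. First, $d_i$ has law $\vec q_i$, so the drafts follow the drafter's law. Second, given $(d_i,\tau_i)$, the leftovers $L_{i,v}=E_{i,v}-\tau_i q_{i,v}$ for $v\neq d_i$ are i.i.d.\ $\Exp(1)$. At the bonus position, $L_{N+1,v}=E_{N+1,v}$ is trivially i.i.d.\ $\Exp(1)$.

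Since $\vec q_i$, $\vec p_i$ and $\vec\pi$ are functions of the prefix and the drafts, all leftovers are jointly i.i.d.\ $\Exp(1)$ given the drafts, across all pairs $(i,v)$ with $v\ne d_i$.

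\textbf{Step 2 (the main obstacle: the draft's own lane).} The lane of the winner $d_i$ has leftover exactly $0$, so it is not fresh. If $\pi_{i,d_i}>0$ it would win the race \eqref{eq:relay} deterministically.

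I must therefore check from the closed form of block verification (Appendix~\ref{app:proofs-block}) that $\pi_{i,d_i}=0$ for $i\le N$. The residual law $\vec r_i$ at a draft position puts no mass on the draft token itself. In the token-level case this is $r_{i,v}\propto(p_{i,v}-q_{i,v})_+$ restricted appropriately. In block verification, the residual at position $i$ is built from the weights $p$ and $q$ at that position and excludes $d_i$: stopping at $i$ with $y=d_i$ would be the same as accepting $d_i$. I expect this verification to be the delicate point of the proof. If the closed form does not literally give $r_{i,d_i}=0$, I would argue that the outcome ``$I=i$, $y=d_i$'' coincides with a longer accepted prefix and can be merged into it.

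With this established, the index set of \eqref{eq:relay} consists only of fresh $\Exp(1)$ lanes. The weights $\pi_{i,v}$ are positive and sum to $1$ over $(i,v)$.

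\textbf{Step 3 (conclude (i)).} With the leftovers i.i.d.\ $\Exp(1)$ and independent of $\vec\pi$ given the drafts, the first statement of \autoref{lem:race}, applied to the flattened index $(i,v)$ with probability vector $\vec\pi$, gives $\Prob(I=i,y=v\mid\text{drafts})=\pi_{i,v}$. This is exactly block verification's conditional law. Hence the number of tokens per round, $I$, has block verification's distribution.

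\textbf{Step 4 (conclude (ii) and (iii)).} Given the prefix, the round's output $d_{<I}\,y$ is produced by drafts with law $\prod_i\vec q_i$ followed by $(I,y)\sim\vec\pi$. This is precisely one round of block verification, which by \citet{sun2025block} produces the target law $\prod P_T$ over the emitted block.

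Across rounds, the hypothesis says that later rounds read values independent of everything before, including the discarded leftovers. The next round is therefore a fresh block-verification round given its prefix. Induction on rounds, chaining conditional laws, gives (ii).

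For (iii), it remains to verify that the window rule of Appendix~\ref{app:proofs-windows} meets the hypothesis. Every window raced in a round is marked as used. Any later reuse of such a window draws fresh values (masking), or continues from unused leftovers, which are fresh by \autoref{lem:race}. So each position of each round reads values that, over the key, are i.i.d.\ uniform and independent of all previously revealed randomness. Averaging over keys then yields distortion-freeness.
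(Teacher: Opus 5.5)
\paragraph{Where the argument breaks.} Your Steps 1, 3 and 4 have the same structure as the paper's proof. You fix the drafts so that the position families stay independent, apply \autoref{lem:race} at each position, read \eqref{eq:relay} as a race with weights $\vec\pi$, and induct over rounds with \citet[Theorem~1]{sun2025block}.

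The gap is in Step 2, which you rightly flag as the crux: your justification there is false. Block verification's residual does \emph{not} exclude the draft. By \eqref{eq:block}, $r_{i,d_i} \propto (c_{i-1}p_{i,d_i} - q_{i,d_i})_+$, which is positive whenever $c_{i-1}p_{i,d_i} > q_{i,d_i}$. For example, with $\vec q_1 = (0.5, 0.5)$, $\vec p_1 = (0.7, 0.3)$ and $d_1$ the first token, $\vec r_1$ is the point mass on $d_1$. The same already happens at token level, where $(p_{d}-q_{d})_+ > 0$ whenever the target likes the draft more than the drafter.

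Your fallback does not rescue the step. \eqref{eq:relay} is a fixed race with fixed weights, so a positive $\pi_{i,d_i}$ would let the lane $(i,d_i)$, whose leftover is $0$, win the relay outright. Merging outcomes would then change the very algorithm whose law you are computing. The outcomes also do not coincide: stopping at $i$ with $y = d_i$ emits $i$ tokens, while any longer kept prefix emits at least $i+1$.

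\paragraph{The missing idea.} The zero mass comes from the product $\pi_{i,d_i} = \Prob(I=i)\,r_{i,d_i}$, not from $\vec r_i$ alone; this is \autoref{lem:zeromass}.
\begin{itemize}
\item If $r_{i,d_i} > 0$, then $c_{i-1}p_{i,d_i} > q_{i,d_i}$, so $c_i = 1$.
\item This forces $h_{i+1} = 1$: for $i<N$ because $1-c_i = 0$ in its denominator, and for $i = N$ because $h_{N+1} = c_N$.
\item So the factor $1-h_{i+1}$ of $\Prob(I=i) = h_i\prod_{j>i}(1-h_j)$ vanishes.
\end{itemize}
In words, a draft the target likes enough to carry residual mass is kept for sure, so the round never stops at it. You also need $\Prob(I=i) = 0$ when $z_i = 0$, so that $\vec r_i$ is only used where it is defined. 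With this lemma, your Step 3 goes through unchanged.

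\paragraph{Two minor corrections.}
\begin{itemize}
\item In Step 4, a single round's random-length block does not have law $\prod P_T$. With one draft, the output $x_1x_2$ has probability $\min(p_{1,x_1}, q_{1,x_1})\,p_{2,x_2}$. So (ii) rests on \citet[Theorem~1]{sun2025block} applied to the concatenation over rounds, which your induction should invoke explicitly.
\item In (iii), under speculative decoding the window rule never continues a race from leftovers. The relay has already read the leftovers of every position, kept or discarded, so every raced window falls back to freshly drawn values. Your leftover-continuation clause should be dropped rather than justified by \autoref{lem:race}.
\end{itemize}
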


This proposition says that the residual race preserves the output distribution and the acceptance of block verification.
Appendix~\ref{app:proofs-exact} proves \autoref{prop:residual} in three steps: memorylessness makes the leftovers fresh given the drafts, block verification gives each draft's own lane zero mass (\autoref{lem:zeromass}), and \citet[Theorem~1]{sun2025block} turn block verification's law into the target's.
It also derives the lane length of each emitted token, which is what the detector reads, and shows that racing on the raw lane lengths instead of the leftovers is biased (\autoref{rem:naive}).
Appendix~\ref{app:proofs-accept} proves that no scheme with \revdel{one draft per position}\rev{a single draft} accepts more often while keeping the output distribution (\autoref{prop:accept}), a bound that the residual race attains for $N = 1$, and compares it with the optimality of block verification for $N > 1$.
\autoref{rem:tokenlevel} gives a coin-free token-level variant, and \autoref{alg:sampler} one round in pseudocode.

\subsection{Controlling the entropy removed}
\label{subsec:entropy}

Under a fixed key, the diversity that remains is exactly the entropy the key does not remove: whatever randomness the independent noise still contributes.
textGrain calibrates its strength directly in these terms, removing a prescribed fraction $\beta$ of each token's entropy~\citep{li2026textgrain}.
In \tsii, the coupling strength $\rho$ plays the same role.
\revdel{The}\rev{For a token of positive entropy, the} entropy removed is a strictly increasing function of $\rho$, from none at $\rho = 0$ to the token's full entropy at $\rho = 1$ (\autoref{thm:monotone}), so any target budget can be met by a unique $\rho$, token by token if desired.
In practice this calibration is unnecessary: a fixed $\rho$ removes nearly the same fraction of \revdel{every}\rev{a} token's entropy -- $\rho \approx 0.64$, $0.81$ and $0.93$ remove about 30\%, 50\% and 70\% respectively, across a range of \rev{many-token }distribution shapes (\autoref{app:entropy}).
We therefore use a single constant $\rho$ throughout, with a per-token $\rho_t$ available when an exact budget guarantee is required.

Two results in \autoref{app:optimal} indicate that little is lost by this simplicity.
First, no distortion-free coupling can extract more detection evidence per token, measured as a Kullback--Leibler divergence, than the entropy it removes (\autoref{prop:kl}).
Second, the copula uses this budget near-optimally: on 5{,}169 real Qwen3.5-9B next-token distributions, it attains on average 97\% of the optimal per-token signal-to-noise ratio in the weak-watermark limit\rev{ with a Gaussian score}, and 91--100\% \revdel{at the finite strengths we use}\rev{at $\rho = 0.5$, $0.8$ and $0.95$ on synthetic distributions}.
A per-token optimisation such as textGrain's can therefore gain only a few percent in signal-to-noise ratio, which is consistent with the empirical parity in \autoref{sec:experiments}.

\subsection{Detection and localization}
\label{subsec:detection}

\begin{figure}[b!]
\centering
\includegraphics[width=\linewidth]{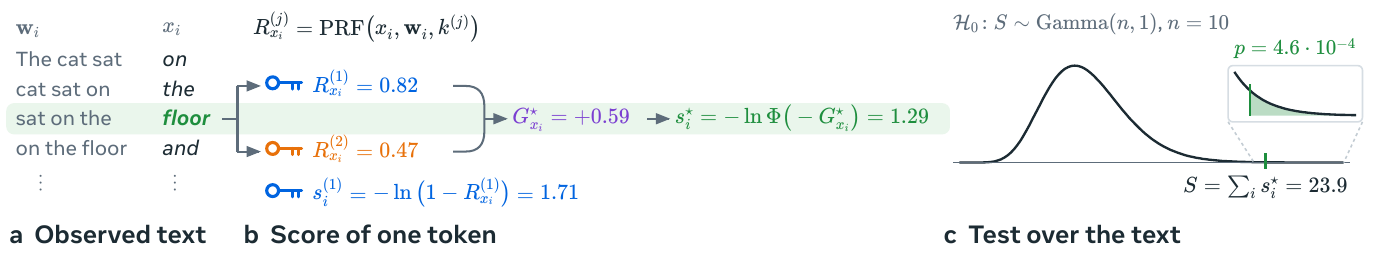}
\caption{\tsii detects the text ``The cat sat on the floor and fell asleep in the sun.'' (\autoref{subsec:detection}).
(a)~Each token $x_i$ and its window $\mathbf w_i$; the highlighted row is the token of \autoref{fig:generation}.
(b)~For ``floor'', the keys' values are merged into $G^\star_{x_i}$ as in \eqref{eq:ts2}, without the noise, and give the score $s^\star_i$; key $k^{(1)}$ alone gives the public score $s^{(1)}_i$.
(c)~The $n = 10$ scores sum to $S = 23.9$, and the $\mathrm{Gamma}(n, 1)$ tail beyond $S$ (box) is $p = 4.6 \cdot 10^{-4}$.}
\label{fig:detection}
\end{figure}

The detector scores each token $x_i$ at its window, with both keys through the virtual key (private score) or with the public key $k^{(1)}$ alone (public score, \autoref{subsec:copula}):
\begin{equation}
\label{eq:scores}
s^\star_i = -\ln \Phi\big(-G^\star_{x_i}\big), \qquad s^{(1)}_i = -\ln\big(1 - R^{(1)}_{x_i}\big).
\end{equation}
Both scores are $\Exp(1)$ under $\Hnull$ for every $\rho_t$, since they do not involve the coupling strength, so the detection machinery of \ts carries over unchanged~\citep{sander2026textseal}.
The sampler can also watermark existing text post hoc, by rephrasing it~\citep{fernandez2025good}.
Keys with windows of different lengths can be merged into $G^\star$ at no cost to the full detector, and because edits break long windows first, the decay of the per-key scores with window length estimates the edit rate (\autoref{app:more}).

\paragraph{Detection tests.}
The private test sums the scores of the $n$ distinct (window, token) pairs, each at its first occurrence (\autoref{sec:background}, \autoref{fig:detection}):
\begin{equation}
\label{eq:sumtest}
S = \sum_{i=1}^n s^\star_i \;\overset{\Hnull}{\sim}\; \mathrm{Gamma}(n, 1), \qquad p = 1 - F_n(S),
\end{equation}
where $F_n$ is the CDF of $\mathrm{Gamma}(n, 1)$; the public test is the same with $s^{(1)}_i$.
Two optional refinements come from \ts~\citep{sander2026textseal}.
\emph{Entropy weighting} down-weights low-entropy tokens, which carry little signal, using the next-token entropy $H_i$ of an auxiliary model:
\begin{equation}
\label{eq:weighted}
S_w = \sum_{i=1}^n w_i\, s^\star_i, \qquad \text{where } w_i = 0.1 + 0.9\, \frac{H_i - H_{\min}}{H_{\max} - H_{\min}}.
\end{equation}
Under $\Hnull$, $S_w$ has mean $\sum_i w_i$ and variance $\sum_i w_i^2$, and the test compares it with the Gamma law of matched moments, of shape $(\sum_i w_i)^2 / \sum_i w_i^2$ and scale $\sum_i w_i^2 / \sum_i w_i$.
The weights do not use the key, so the test is valid for any auxiliary model.
\emph{Localization} finds watermarked spans in documents that mix human and generated text: it tests windows of dyadic lengths $L_0, 2L_0, 4L_0, \dots$, slid at half-length offsets, about $4n/L_0$ candidates, Bonferroni-corrected, and greedily extracts the most significant ones.

\section{Experiments}
\label{sec:experiments}

We compare detection at equal diversity, first without and then with speculative decoding (\autoref{subsec:spec}). Unless stated otherwise, all differences below are in percentage points of detected answers.

\subsection{Setup}

\paragraph{Generation.}
We generate answers to ELI5 questions with Llama-3.1-8B-Instruct, Qwen3.5-9B, and Qwen3.5-27B, at temperature 1 and top-$p$ 0.95, up to 200 new tokens, with watermark window size $k = 3$.
For speculative decoding, Qwen3.5-0.8B drafts $N = 8$ tokens per round for Qwen3.5-9B, and every scheme verifies the drafts by block verification~\citep{sun2025block}; \autoref{app:more} repeats the comparison with $N = 4$ and $16$.
All schemes, \tsii included, follow the repeated-window rules of Appendix~\ref{app:proofs-windows} (one use per window at generation; each (window, token) pair scored once); under speculative decoding, every window that a round reads counts as used, kept or not.

\paragraph{Metrics.}
\emph{Detection} is the fraction of watermarked answers declared watermarked at a nominal false-positive rate of $10^{-3}$. Because no closed-form null is exact in practice, each detector's threshold is the empirical $10^{-3}$ quantile of its p-values on 10{,}000 unwatermarked answers scored under 10 keys ($10^5$ null p-values; calibration in \autoref{fig:calib}). Detection is measured on the first 100 or 200 tokens, with key-only (unweighted) scores unless stated otherwise. We generate 3000 answers per configuration on Qwen3.5-9B, with or without speculative decoding, and 1000 on the other models.
\emph{Diversity} is measured as in \ts, by Self-BLEU among answers to the same question under a fixed key (mean BLEU over all pairs of answers; lower is more diverse), on 100 questions with 32 answers each, or 16 under speculative decoding. Measuring diversity by the share of distinct 4-grams instead leaves the conclusions unchanged, except that textGrain is then slightly ahead of the copulas without speculative decoding (\autoref{app:more}).
To compare at equal diversity, each method's detection is interpolated along its diversity--detection curve at \ts's Self-BLEU; 95\% intervals come from a bootstrap that resamples questions jointly across methods, so intervals for differences are paired.

\begin{figure}[b!]
\centering
\includegraphics[width=0.88\linewidth]{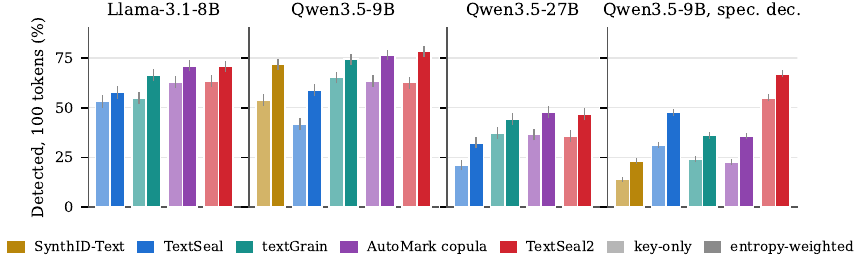}
\caption{Entropy weighting improves detection for every watermark, including under speculative decoding. \% of 100-token answers detected at a false-positive rate of $10^{-3}$ with key-only scores (light bars) and with \ts's entropy-weighted test (full bars), at comparable Self-BLEU. The gains are 5--18 percentage points at 100 tokens and 4--23 at 200 tokens (\autoref{tab:entropy}); without speculative decoding, \tsii and AutoMark's copula coincide in distribution and differ only within their 95\% intervals, as expected.}
\label{fig:entropy}
\end{figure}

\paragraph{Methods.}
We evaluate \ts with $\alpha \in \lbrace 0.5, 0.7, 0.9\rbrace$ (plus 0.8 and 0.95 on Qwen3.5-9B); AutoMark's copula and \tsii with $\rho \in \lbrace 0.7, 0.8, 0.9, 0.95\rbrace$ (on Qwen3.5-9B also 0.5, 0.6, 0.975, 0.99 and 0.995, plus 0.85, 0.925 and 0.999 under speculative decoding), using a 53-bit PRF (\autoref{app:prf}), with \tsii scored by $s^\star$ (or $s^{(1)}$ for the public-key test); our reimplementation of textGrain ($B = m = 256$, $\beta \in \lbrace 0.25, 0.5, 0.75\rbrace$); and, on Qwen3.5-9B, SynthID-Text with 2--30 tournament layers (2--20 under speculative decoding), using its official keyed coins and detection test (fused over its two keys under speculative decoding, \autoref{app:details}).
Under speculative decoding, \tsii uses the residual race (\autoref{subsec:residual}).
The baselines keep block verification's native stopping rule on the unwatermarked laws $\vec p$ and $\vec q$, as \ts and SynthID-Text do with token-level verification~\citep{sander2026textseal,dathathri2024scalable}: each draft comes from the scheme's sampler on $\vec q_i$, and the last token of the round from the same sampler under an independent key on the residual law $\vec r_I$, or from the scheme's normal step on $\vec p_{N+1}$ when it is the bonus token.
\ts routes drafts and bonus tokens between its two keys as usual and uses the other key for the replacement token; textGrain, SynthID-Text and AutoMark's copula, which has no speculative mode of its own, use their first key for drafts and bonus tokens and a second key for replacement tokens.
textGrain's own recipe instead runs block verification between the watermarked distributions, as in its report~\citep{li2026textgrain}.

\subsection{Without speculative decoding}
\label{subsec:nonspec}

\paragraph{The copula dominates the coin flip.}
Interpolated at \ts's Self-BLEU, \tsii detects 10, 22 and 8 percentage points more 100-token answers than the coin flip~\citep{sander2026textseal} on Llama-3.1-8B, Qwen3.5-9B and Qwen3.5-27B respectively (paired 95\% intervals $[+4, +15]$, $[+15, +25]$ and $[+3, +12]$; \autoref{fig:teaser}a and \autoref{fig:three_models}).
Against textGrain, \tsii is ahead on Llama-3.1-8B ($+11$ $[+4, +17]$), level on Qwen3.5-9B ($+1$ $[-5, +8]$), and behind on Qwen3.5-27B ($-5$ $[-10, 0]$).
AutoMark's copula, which coincides with \tsii in distribution, stays within 3 points throughout, confirming that the gain over \ts comes from removing the coin flip rather than from implementation differences.
With the public key alone, \tsii still detects 27\% / 67\% of 100 / 200-token answers at $\rho = 0.9$; merging four keys detects as much as merging two, so extra keys cost nothing (\autoref{app:more}).

\paragraph{Entropy weighting.} \ts's entropy-weighted test improves every method by 4--23 points at 200 tokens, without changing the ranking (\autoref{fig:entropy}, \autoref{tab:entropy}). On Qwen3.5-9B, \tsii detects 78\% / 97\% of 100 / 200-token answers, against 74\% / 97\% for textGrain; under speculative decoding, 67\% / 95\%, against 48\% / 87\% for \ts and 36\% / 76\% for textGrain (Self-BLEU 18.9--20.1).

\paragraph{Robustness and cost.} Under random word substitutions, \tsii retains more detections than \ts and as many as textGrain, while paraphrasing by an LLM removes almost every watermark, as expected for token-level schemes (\autoref{tab:robust}). The gains over the coin flip persist at temperature 0.7 (\autoref{app:more}). \tsii's decoding step is within 0.6\% of \ts's once key-only work overlaps the forward pass (\autoref{tab:ts12}), and its post-logits work is one pass over the vocabulary (\autoref{app:latency}).

\paragraph{Code and long generations.} On the 500 MBPP test problems (\autoref{tab:code}), \tsii detects 55\% of answers at 256 tokens, against 39\% / 46\% / 50\% for \ts / textGrain / SynthID-Text (AutoMark's copula: 59\%). In long thinking traces (up to 2048 tokens; \autoref{tab:repeat}), 16\% of positions on ELI5 and 34\% on MBPP repeat an earlier window. There, continuing each window's race from its leftovers instead of using it once detects $3.8$ $[+2.1, +5.6]$ points more at 256 tokens on ELI5 and $7.6$ $[+2.4, +12.8]$ points more at 512 tokens on MBPP, and raises median z-scores, without measurable distortion (\autoref{tab:repeat}).

\subsection{With speculative decoding}
\label{subsec:spec}

With block verification and 8 drafts, unwatermarked decoding emits 2.74 tokens per round, and every scheme but one keeps this number: paired over 480 prompts, \tsii's residual race differs by $+0.02$ $[-0.01, +0.05]$ tokens per round, as \autoref{prop:residual} predicts, and the baselines by at most $0.04$ (\autoref{fig:ablation}a in \autoref{app:details}).
The exception is textGrain's own recipe, which verifies the watermarked distributions and emits 5\% fewer tokens per round ($-0.14$ $[-0.17, -0.11]$).
Interpolated at \ts's Self-BLEU, \tsii detects 59\% / 89\% of 100 / 200-token answers, against 31\% / 71\% for \ts, 31\% / 67\% for textGrain, 25\% / 60\% for AutoMark's copula, and 17\% / 50\% for SynthID-Text; all paired intervals against \tsii exclude zero (\autoref{tab:main}, \autoref{fig:teaser}b).
textGrain's own recipe comes closest (52\% / 84\%, paired differences $+7$ $[-1, +15]$ / $+5$ $[-1, +11]$ in favour of \tsii), at the cost of acceptance.
With 4 and 16 drafts the ranking is the same, and \tsii leads every baseline that keeps acceptance by 15--41 points at 200 tokens (\autoref{tab:ndrafts}); its lead over textGrain's own recipe is significant with 4 drafts ($+6$ $[+2, +10]$) but not with 16 ($+1$ $[-1, +7]$). A second drafter--target pair from the Llama family gives the same ranking (\autoref{tab:llamapair}). On AIME math problems, with answers of up to 8192 tokens, \tsii also detects the most answers (41\% at 2048 tokens, against 38\% for textGrain's own recipe and 7--22\% for the other baselines), and accuracy is unchanged (\autoref{tab:aime}); sweeping strengths, \tsii also leads at equal diversity, with 30\% detected at 2048 tokens at \ts's Self-BLEU against 23\% for the coin flip and 9--21\% for the other baselines (\autoref{fig:aime}).

\paragraph{Memorylessness is what makes this work.} AutoMark's copula coincides with \tsii in distribution, so the entire gap between them in \autoref{fig:teaser}b (25\% / 60\% against 59\% / 89\%) comes from how the last token of each round is drawn: under an independent key, or by the residual race on the draft races' leftovers. Independent keys waste signal twice over. The last tokens (37\% of the text) carry only the second key's signal, except at the bonus position (5\% of rounds), and under the draft key they score 0.84 on average -- below the null mean of 1, because a token drawn from block verification's residual law lost its draft race -- actively diluting the fused score. The residual race instead keeps the keyed values in play: last tokens score 1.34 under $s^\star$, against 1.42 for kept drafts (\autoref{app:details}).

\paragraph{Distortion.} \autoref{prop:residual} guarantees the exactness of the residual race under an ideal PRF; we check the implementation. Generating with one key per answer (independent draws over keys) and reading the laws $\vec p$ and $\vec q$ used at generation, we test whether $-\ln p_x - H(\vec p)$ over all tokens and $-\ln r_{I,y} - H(\vec r_I)$ at the last token of each round have mean zero, as they must without distortion. \tsii and the baselines are indistinguishable from unwatermarked block verification on both tests (textGrain's own recipe on the all-token test, the only one that applies to it), whereas racing on the raw lane lengths fails the all-token test, by $+11.7$ $[+5.8, +17.8]$ millinats per token (\autoref{fig:ablation}).

\section{Conclusion}
\label{sec:conclusion}

\tsii satisfies the four requirements we set out for a deployed watermark: it is exactly distortion-free, preserves diversity under a fixed key, stays within 0.9\% of plain sampling when key-only work overlaps the forward pass, and maintains the acceptance rate of speculative decoding.
Its detection is comparable to that of textGrain overall -- ahead on Llama-3.1-8B, level on Qwen3.5-9B, and slightly behind on Qwen3.5-27B at 100 tokens -- without textGrain's per-token optimisation.
Our analysis attributes the previous gap to TextSeal's coin-flip routing rather than to the absence of optimal transport: merging the two keys into one virtual key removes the $\sqrt 2$ loss in signal-to-noise ratio, while still allowing one key to be published for public detection.
For speculative decoding, drawing the drafts by keyed races and choosing block verification's outcome by a race on their leftovers carries the watermark without lowering acceptance.

\paragraph{Limitations.}
Our evaluation covers three domains (open-ended ELI5 answers, MBPP code, and AIME math problems under speculative decoding) and three open-weight models, at temperatures 0.7--1.
With block verification and 8 drafts, speculative decoding costs 2--4 percentage points of detection relative to standard decoding, and our speculative-decoding implementation is unbatched, so we do not measure wall-clock speedups.
We do not analyse attacks that exploit knowledge of the public key beyond the private verification it enables~\citep{milis2026public}.
Finally, textGrain's code is not public, so we use our reimplementation (\autoref{app:details}); its deployed parameters may differ.

\clearpage
\bibliographystyle{ieeenat_fullname}
\bibliography{references}

\clearpage
\beginappendix
\section{Proofs}
\label{app:proofs}

\revdel{This appendix proves the results of \autoref{sec:method}: copula mixing (Appendix~\ref{app:proofs-copula}), exponential races (Appendix~\ref{app:proofs-races}), block verification (Appendix~\ref{app:proofs-block}), the exactness of the residual race (Appendix~\ref{app:proofs-exact}), the repeated-window rules on which it relies (Appendix~\ref{app:proofs-windows}), its acceptance (Appendix~\ref{app:proofs-accept}), and the entropy removed (Appendix~\ref{app:proofs-entropy}).
Throughout, keyed values at a previously unused window are i.i.d.\ uniform (ideal PRF), and the copula noise $\eta$ is independent of them.
\tsii draws no other randomness; the accept coins of the token-level scheme in \autoref{rem:tokenlevel} are independent of both.
Results stated in the main text are restated, and each remark follows the result it qualifies.}
\rev{This appendix proves the results of \autoref{sec:method}.
Throughout, keyed values at an unused window are i.i.d.\ uniform (ideal PRF), and the copula noise $\eta$ and the values drawn afresh at used windows, the only other randomness, are independent of them.}

\subsection{Copula mixing}
\label{app:proofs-copula}
\revdel{
This subsection shows that the copula race \eqref{eq:ts2} is distortion-free for every coupling strength, and for any weighting of the two keys.}

\begin{lemma}[Distortion-free for every coupling]
\label{lem:copula}
Let $R^{(1)}_v, R^{(2)}_v$ be i.i.d.\ uniform and $\eta_v$ i.i.d.\ $\mathcal N(0,1)$, independent of them. For every $\rho \in [0,1]$, the values $\tilde R_v = \Phi\big(\rho\, G^\star_v + \sqrt{1-\rho^2}\,\eta_v\big)$ of \eqref{eq:ts2} are i.i.d.\ uniform, so the race \eqref{eq:race} on them emits $v$ with probability $p_v$.
\end{lemma}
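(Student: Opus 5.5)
The argument goes entirely through Gaussian space. Since $R^{(j)}_v$ is uniform, the probability integral transform makes $G^{(j)}_v = \Phi^{-1}(R^{(j)}_v)$ standard normal. The family $\{G^{(1)}_v, G^{(2)}_v, \eta_v\}_{v \in \V}$ is then mutually independent: the two keys are independent of each other and of the noise, and each is i.i.d.\ across $v$. Because $\rho^2 + (1-\rho^2) = 1$, the race input
\begin{equation*}
Z_v = \frac{\rho}{\sqrt 2}\, G^{(1)}_v + \frac{\rho}{\sqrt 2}\, G^{(2)}_v + \sqrt{1-\rho^2}\, \eta_v
\end{equation*}
is a linear combination of independent standard normals with squared coefficients summing to $\rho^2/2 + \rho^2/2 + 1 - \rho^2 = 1$. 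Hence $Z_v \sim \mathcal N(0,1)$ for every $\rho \in [0,1]$, including the endpoints: at $\rho = 0$ we have $Z_v = \eta_v$, and at $\rho = 1$ we have $Z_v = G^\star_v$.

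Independence across $v$ follows with no Gaussian computation. Each $Z_v$ is a measurable function of the triple $(G^{(1)}_v, G^{(2)}_v, \eta_v)$, and these triples are independent across distinct $v$. One could also note that the vector $(Z_v)_v$ is jointly Gaussian with $\mathrm{Cov}(Z_v, Z_{v'}) = 0$ for $v \neq v'$. Applying $\Phi$ coordinatewise then gives $\tilde R_v = \Phi(Z_v)$ i.i.d.\ $\Unif$.

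The sampling claim follows from the exponential-race fact stated after \eqref{eq:race}, which is also the $x$-marginal part of \autoref{lem:race} with $\vec q = \vec p$. Setting $E_v = -\ln \tilde R_v$, the $E_v$ are i.i.d.\ $\Exp(1)$. Each $E_v/p_v$ is then an independent $\Exp(p_v)$ variable, and the minimum of independent exponentials is attained at $v$ with probability $p_v/\sum_u p_u = p_v$. Tokens with $p_v = 0$ have infinite ratio and never win. Ties occur with probability zero, because the $\tilde R_v$ have a continuous law.

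No step is a real obstacle; the proof is bookkeeping. The one point to state explicitly is that the noise $\eta$ is independent of both keys, which the appendix preamble grants. The same variance argument covers any weighting $G^\star = \sum_j a_j G^{(j)}$ with $\sum_j a_j^2 = 1$, and so extends the lemma to $M$ keys as mentioned in \autoref{subsec:copula}.
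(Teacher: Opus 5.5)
Your proof is correct and follows the paper's argument: map each key to a standard normal, note that the race input has unit variance for every $\rho$, get independence across candidates from their disjoint keyed values and noise, and conclude with the exponential-race fact. The differences are cosmetic: you expand $Z_v$ into three independent normals instead of first forming $G^\star_v$, and you spell out the race step (ties, tokens with $p_v = 0$) that the paper cites from \autoref{sec:background}.
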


\begin{proof}
\revdel{We first follow a single candidate's value through to the race input, then check independence across candidates.

\emph{Step 1: the virtual key is standard normal.}
Each $G^{(j)}_v = \Phi^{-1}(R^{(j)}_v)$ applies the normal quantile function to a uniform variable, so it is $\mathcal N(0,1)$.
The two keys are independent, so their scaled sum $G^\star_v = (G^{(1)}_v + G^{(2)}_v)/\sqrt 2$ is $\mathcal N(0,1)$ too.
Different candidates $v$ use disjoint keyed values, so the $G^\star_v$ are independent across $v$.

\emph{Step 2: the copula output is standard normal.}
The copula forms $Z_v = \rho\, G^\star_v + \sqrt{1-\rho^2}\,\eta_v$ from two independent centred Gaussians.
Its variance is $\rho^2 + (1-\rho^2) = 1$ for every $\rho$, so $Z_v \sim \mathcal N(0,1)$.
Different candidates use disjoint pairs $(G^\star_v, \eta_v)$, so the $Z_v$ are i.i.d.\ $\mathcal N(0,1)$.

\emph{Step 3: the race sees i.i.d.\ uniforms.}
Applying the normal CDF gives $\tilde R_v = \Phi(Z_v)$, uniform on $[0,1]$ and independent across $v$.
The race \eqref{eq:race} on them emits $v$ with probability $p_v$ (\autoref{sec:background}).}
\rev{Each $G^{(j)}_v = \Phi^{-1}(R^{(j)}_v)$ is $\mathcal N(0,1)$, and the two keys are independent, so $G^\star_v$ is $\mathcal N(0,1)$.
$Z_v = \rho\, G^\star_v + \sqrt{1-\rho^2}\,\eta_v$ is a sum of independent centred Gaussians of variance $\rho^2 + (1-\rho^2) = 1$.
Distinct candidates use disjoint keyed values and noise, so the $Z_v$ are i.i.d.\ $\mathcal N(0,1)$, the $\tilde R_v = \Phi(Z_v)$ are i.i.d.\ uniform, and the race emits $v$ with probability $p_v$ (\autoref{sec:background}).}
\end{proof}

\revdel{\begin{remark}[Other ways to combine the keys]
Any weight $a \in [0,1]$ chosen independently of the keyed values (fixed, random, or prefix-dependent) gives a standard normal $G^{(a)}_v = \big(a\, G_v^{(1)} + (1-a)\, G_v^{(2)}\big)/\sqrt{a^2 + (1-a)^2}$, and the same proof applies after conditioning on $a$. \ts's routing is the case $a \in \lbrace 0, 1\rbrace$ drawn at random with $\rho = 1$; \tsii is the balanced case $a = 1/2$.
\end{remark}}

\subsection{Exponential races and their leftovers}
\label{app:proofs-races}

\revdel{This subsection proves \autoref{lem:race} and derives from it the residual race at a single position, which an earlier version of \tsii used to replace rejected drafts.

\noindent\textbf{Lemma~\ref*{lem:race} (restated).} \emph{Let $E_v$ be i.i.d.\ $\Exp(1)$, $q_v > 0$ with $\sum_v q_v = 1$, $\tau = \min_v E_v/q_v$ and $x = \argmin_v E_v/q_v$. Then $\Prob(x = v) = q_v$, and given $(\tau, x)$ the residuals $(E_v - \tau q_v)_{v \neq x}$ are i.i.d.\ $\Exp(1)$.}}
\rev{\noindent\textbf{Lemma~\ref*{lem:race} (restated).} \emph{If the $E_v$ are i.i.d.\ $\Exp(1)$ and $\vec q$ is a probability vector, the race $x = \argmin_v E_v/q_v$, with time $\tau$, emits $v$ with probability $q_v$, and given $(x, \tau)$ the leftovers $E_v - \tau q_v$, $v \neq x$, are again i.i.d.\ $\Exp(1)$.}}

\begin{proof}
Each clock $T_v = E_v/q_v$ rings after an $\Exp(q_v)$ time, with density $q_v e^{-q_v t}$.
The winner $x$ is the first clock to ring, and $\tau$ is its time.
We compute the joint law of the winner, its time, and the losing clocks' \revdel{residuals,} \rev{leftovers,} and check that it factorises.

Fix a token $x$, a time $t \geq 0$, and thresholds $b_v \geq 0$ for $v \neq x$.
The event ``$x$ wins at time $t$ and every other \revdel{residual} \rev{leftover} exceeds $b_v$'' requires $E_x/q_x \in \mathrm dt$ and $E_v > t q_v + b_v$ for $v \neq x$.
The clocks are independent, so this probability is one density term times $|\V|-1$ tail terms:
\[
\Prob\big(x \text{ wins},\ \tau \in \mathrm dt,\ E_v - t q_v > b_v\ \forall v \neq x\big)
= q_x e^{-q_x t}\,\mathrm dt \prod_{v \neq x} e^{-(q_v t + b_v)}
= \underbrace{q_x}_{x} \cdot \underbrace{e^{-t}\,\mathrm dt}_{\tau} \cdot \prod_{v \neq x} \revdel{\underbrace{e^{-b_v}}_{\text{residuals}},} \rev{\underbrace{e^{-b_v}}_{\text{leftovers}},}
\]
using $\sum_v q_v = 1$.

The right side factorises in $x$, $t$, and the $b_v$.
Reading them off: $x$ has law $\vec q$, $\tau$ has density $e^{-t}$ on $t \ge 0$ and is hence $\Exp(1)$, and each \revdel{residual} \rev{leftover} $E_v - \tau q_v$ has tail $e^{-b_v}$, so the \revdel{residuals} \rev{leftovers} are i.i.d.\ $\Exp(1)$, independent of $(\tau, x)$.

Tokens with $q_v = 0$ never win, and their $E_v$ are left untouched.
\end{proof}

\revdel{At a single position, let $\vec r$ be a law that excludes the draft $x$, as the residual \eqref{eq:residual} does after a rejection ($r_x = 0$).
Given $(x, \tau)$, the leftovers divided by $r_v$ are independent $\Exp(r_v)$ variables by \autoref{lem:race}, so the race
\begin{equation}
\label{eq:replace}
y = \argmin_{v:\, r_v > 0} \; \frac{E_v - \tau\, q_v}{r_v}
\end{equation}
samples exactly from $\vec r$, while reusing the keyed values that drew the draft; tokens with $q_v = 0$ keep $E_v$ as their leftover.
In uniform form, \eqref{eq:replace} is the race $\argmax_v (R'_v)^{1/r_v}$ on the rescaled values $R'_v = \tilde R_v / \tilde R_x^{\,q_v/q_x}$.}

\subsection{Block verification}
\label{app:proofs-block}

\revdel{This subsection gives the outcome law of block verification in closed form and shows that it puts no mass on the drafts' own lanes.}
Fix a round with drafts $d_1, \dots, d_N$, drafter laws $\vec q_1, \dots, \vec q_N$ and target laws $\vec p_1, \dots, \vec p_{N+1}$.
For $i \le N$, block verification~\citep{sun2025block} uses
\begin{equation}
\label{eq:block}
c_0 = 1, \quad c_i = \min\Big(1,\ c_{i-1}\, \frac{p_{i,d_i}}{q_{i,d_i}}\Big), \quad u_{i,v} = (c_{i-1}\, p_{i,v} - q_{i,v})_+, \quad z_i = \sum_v u_{i,v}, \quad r_{i,v} = \frac{u_{i,v}}{z_i},
\end{equation}
with $\vec r_{N+1} = \vec p_{N+1}$.
By the retention identity of \citet[Lemma~3]{sun2025block}, $c_i$ is the probability that the round keeps $d_1, \dots, d_i$, given \revdel{these drafts}\rev{its first $i$ drafts and averaged over the later ones}.
It drops when the target likes a draft less than the drafter does and recovers when later drafts are likelier under the target, so a likely later draft can rescue an unlikely earlier one.
With $c_{i-1} = 1$, $\vec r_i$ is the residual \eqref{eq:residual} of token-level verification.
The success probabilities are
\[
h_1 = 1, \qquad h_i = \frac{z_i}{z_i + 1 - c_{i-1}} \quad (2 \le i \le N), \qquad h_{N+1} = c_N,
\]
with $h_i = 1$ when $z_i$ and $1 - c_{i-1}$ both vanish.
Native block verification tests each position $i \ge 2$ with an independent uniform, which succeeds with probability $h_i$, stops at the last success, or at position 1 if every test fails, and draws $y \sim \vec r_I$.
Given the drafts, its outcome therefore has the law
\begin{equation}
\label{eq:stoplaw}
\pi_{i,v} = \Prob(I = i)\, r_{i,v}, \qquad \Prob(I = i) = h_i \prod_{j > i} (1 - h_j),
\end{equation}
whose total is $1 - \prod_{j=1}^{N+1} (1 - h_j) = 1$, since $h_1 = 1$.
\citet{sun2025block} index by the number of kept drafts: their $p_i$, $h^{\mathrm{block}}_i$ and $\tau$ are our $c_i$, $h_{i+1}$ and $I - 1$, and their residual after $\tau$ kept drafts is $\vec r_{\tau+1}$.
\revdel{For $N = 1$, $\Prob(I = 1) = 1 - \min(1, p_{1,d_1}/q_{1,d_1})$ and $\vec r_1$ is the residual \eqref{eq:residual}, so block verification is token-level verification.}
\rev{For $N = 1$ it is token-level verification: $\Prob(I = 1) = 1 - \min(1, p_{1,d_1}/q_{1,d_1})$, and $\vec r_1$ is the residual \eqref{eq:residual}.}

\begin{lemma}[Zero mass on the draft lanes]
\label{lem:zeromass}
For every $i \le N$, $\pi_{i,d_i} = 0$, and $\Prob(I = i) = 0$ if $z_i = 0$.
\end{lemma}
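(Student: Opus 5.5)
The first claim, $\pi_{i,d_i}=0$, reduces through \eqref{eq:stoplaw} to $r_{i,d_i}=0$ whenever $z_i>0$. That in turn reduces to $u_{i,d_i}=(c_{i-1}p_{i,d_i}-q_{i,d_i})_+=0$. The second claim, $\Prob(I=i)=0$ when $z_i=0$, reduces to showing $h_i=0$ in that case. The plan is to check each directly from the closed forms \eqref{eq:block}, treating $i=1$ and $i\ge2$ separately.

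\emph{Draft lanes.} Fix $i\le N$. The draft $d_i$ was emitted by the draft race \eqref{eq:draft}, and a lane with $q_{i,v}=0$ never wins (\autoref{lem:race}), so $q_{i,d_i}>0$. We need $c_{i-1}p_{i,d_i}\le q_{i,d_i}$, which is not automatic. Whenever the draft is rejected at position $i$ in a way that makes $I=i$ possible, the intended meaning is that $d_i$ lost the acceptance test. I would instead verify the inequality in the only case where it matters: if $c_{i-1}p_{i,d_i}>q_{i,d_i}$, then $c_i=\min(1,c_{i-1}p_{i,d_i}/q_{i,d_i})$, and the mass $\pi_{i,d_i}$ must still be shown to vanish.

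This is the step I expect to be the main obstacle. As written, $u_{i,d_i}$ can be positive when the target likes $d_i$ more than the drafter. So I would re-read \eqref{eq:stoplaw} as the law of block verification \emph{given the drafts}. On the event $I=i$ with $y=d_i$, the emitted block $d_{<i}\,d_i$ coincides with keeping $d_i$, which native block verification counts as a stop at a later position. The clean fix I would try is to show that $h_{i+1}=1$ whenever $c_{i-1}p_{i,d_i}\ge q_{i,d_i}$, which kills $\Prob(I=i)$ through the factor $(1-h_{i+1})$.

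Indeed, in that case $c_i=\min(1,c_{i-1}p_{i,d_i}/q_{i,d_i})$. If $c_i=1$, then $1-c_i=0$, so $h_{i+1}=z_{i+1}/z_{i+1}=1$ (or $1$ by convention if both vanish); for $i=N$ we get $h_{N+1}=c_N=1$. If instead $c_i<1$, then $c_{i-1}p_{i,d_i}/q_{i,d_i}<1$, which contradicts the case assumption $c_{i-1}p_{i,d_i}>q_{i,d_i}$, so only the boundary case of equality remains, where $u_{i,d_i}=0$ anyway. Hence in every case either $u_{i,d_i}=0$ or $\Prob(I=i)=0$, and so $\pi_{i,d_i}=0$.

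\emph{Zero residual mass.} Now suppose $z_i=0$. For $i\ge2$, if $1-c_{i-1}>0$ then $h_i=0$, so $\Prob(I=i)=0$. If $1-c_{i-1}=0$ as well, the convention gives $h_i=1$. But then $c_{i-1}=1$ and $z_i=\sum_v(p_{i,v}-q_{i,v})_+=\TV(\vec p_i,\vec q_i)=0$, so $\vec p_i=\vec q_i$ and $c_i=\min(1,1)=1$. The argument above then gives $h_{i+1}=1$, hence $\Prob(I=i)=0$. For $i=1$, $c_0=1$, so $z_1=0$ forces $\vec p_1=\vec q_1$ and $c_1=1$, and the same argument applies.

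The remaining care is the bookkeeping at $i=N$, where $h_{N+1}=c_N$ replaces the ratio formula. There $c_N=1$ gives $h_{N+1}=1$, and the argument goes through unchanged.
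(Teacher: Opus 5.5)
Your proof is correct and follows the paper's argument. For the draft lanes, either $c_{i-1}p_{i,d_i}\le q_{i,d_i}$, so $u_{i,d_i}=0$, or $c_i=1$, which forces $h_{i+1}=1$ and kills $\Prob(I=i)$ through the factor $1-h_{i+1}$. For $z_i=0$, you split on $c_{i-1}<1$, which gives $h_i=0$, versus $c_{i-1}=1$, which gives $\vec p_i=\vec q_i$ and hence $c_i=1$, exactly as the paper does. The only issues are expository and harmless: the equality case $c_{i-1}p_{i,d_i}=q_{i,d_i}$ already gives $c_i=1$ rather than being a leftover case, and your opening reduction of the second claim to ``$h_i=0$'' is replaced by your own correct handling of the case $c_{i-1}=1$.
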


\begin{proof}
If $c_i = 1$, then $h_{i+1} = 1$: for $i < N$ because $1 - c_i = 0$, and for $i = N$ because $h_{N+1} = c_N$.
The factor $1 - h_{i+1}$ of $\Prob(I = i)$ then vanishes.
If $r_{i,d_i} > 0$, then $c_{i-1} p_{i,d_i} > q_{i,d_i}$, so $c_i = 1$ and $\Prob(I = i) = 0$; otherwise $r_{i,d_i} = 0$.
Either way, $\pi_{i,d_i} = 0$.
If $z_i = 0$ and $c_{i-1} < 1$, then $i \ge 2$ and $h_i = 0$.
If $z_i = 0$ and $c_{i-1} = 1$, then $p_{i,v} \le q_{i,v}$ for every $v$, so $\vec p_i = \vec q_i$, $c_i = 1$, and $\Prob(I = i) = 0$ as above.
\end{proof}
So $\vec r_i$ is needed only when $z_i > 0$.

\subsection{Exactness of the residual race}
\label{app:proofs-exact}

\revdel{This subsection proves \autoref{prop:residual}, derives what the detector reads, and qualifies the result with two remarks.
The proof has three steps.
First, each draft race reads only the values of its own position, and positions have independent values, so given the drafts the leftovers of the losing lanes are i.i.d.\ $\Exp(1)$ by \autoref{lem:race}, as are the values of the bonus position.
Second, the only other lane is the draft's own, whose leftover is 0, and \autoref{lem:zeromass} gives it zero mass: if $r_{i,d_i} > 0$, then $c_{i-1} p_{i,d_i} > q_{i,d_i}$, so $c_i = 1$, and the round keeps $d_i$ for sure and never stops at position $i$.
Third, the relay is then an exponential race over fresh values with weights $\vec \pi$, so $(I, y)$ follows $\vec \pi$, and \citet[Theorem~1]{sun2025block} turn block verification's law into the target's.
}
\noindent\textbf{Proposition~\ref*{prop:residual} (restated).} \emph{Suppose that every round reads i.i.d.\ uniform keyed values at its windows, independent across positions and of earlier rounds, and independent copula noise.
(i) Given the drafts, $(I, y)$ has law $\vec \pi$, so the number of tokens per round, and hence acceptance, is that of block verification.
(ii) The generated text follows the target's law.
(iii) The window rule \revdel{(\autoref{sec:method}, Appendix~\ref{app:proofs-windows})} \rev{(Appendix~\ref{app:proofs-windows})} meets this \revdel{condition, so} \rev{condition:} the \revdel{generated} text is \revdel{exactly distortion-free, averaged} \rev{distortion-free} over keys.}

\begin{proof}
Fix everything before the round, and write $\vec E_i = (E_{i,v})_{v \in \V}$ for the lane lengths of position $i$.

\emph{(i) Leftovers given the drafts.}
Fix a sequence $\delta = (\delta_1, \dots, \delta_N)$ of drafts.
The windows $\mathbf w_i$ and the laws $\vec q_i$, $\vec p_i$ and $\vec \pi$ are then fixed, and by assumption and \autoref{lem:copula}, $\vec E_1, \dots, \vec E_{N+1}$ are independent families of i.i.d.\ $\Exp(1)$ variables.
The event $\lbrace d = \delta \rbrace$ is the intersection of the events $\lbrace \argmin_v E_{i,v}/q_{i,v} = \delta_i \rbrace$, $i \le N$, each of which involves only $\vec E_i$.
So given $d = \delta$, the families stay independent, $\vec E_{N+1}$ keeps its law, and each $\vec E_i$, $i \le N$, has the law of i.i.d.\ $\Exp(1)$ variables conditioned on $\delta_i$ winning the race with weights $\vec q_i$.
By \autoref{lem:race}, the leftovers $E_{i,v} - \tau_i q_{i,v}$, $v \neq \delta_i$, are then i.i.d.\ $\Exp(1)$ and independent of $\tau_i$; tokens with $q_{i,v} = 0$ keep $E_{i,v}$ as their leftover.
Hence, given the drafts, the leftovers of all lanes but the drafts' own, together with $\vec E_{N+1}$, are i.i.d.\ $\Exp(1)$ and independent of $\tau_1, \dots, \tau_N$.

\emph{(i) Relay.}
A draft's own lane has leftover $E_{i,d_i} - \tau_i q_{i,d_i} = 0$, but $\pi_{i,d_i} = 0$ by \autoref{lem:zeromass}, so \eqref{eq:relay} reads only the leftovers above.
It is then the race of \autoref{lem:race} over the outcomes with $\pi_{i,v} > 0$, with these leftovers as values and the weights $\vec \pi$, which sum to 1.
So $\Prob(I = i, y = v \mid d) = \pi_{i,v}$, the law of block verification given the drafts, and the proof of \autoref{lem:race} shows that the winning time $T$ is $\Exp(1)$ and independent of the winner.
By \autoref{lem:race}, $d_i \sim \vec q_i$ given $d_{<i}$, so $(d, I, y)$ has the law of a round of native block verification; in particular, \revdel{the number $I$ of tokens per round has its law, and} acceptance is kept.

\emph{(ii) Output.}
Given everything before it, each round emits $d_{<I}\, y$ with the law of a native round, so by induction over rounds the text has the law of speculative decoding with block verification, which is the target's law~\citep[Theorem~1]{sun2025block}.

\emph{(iii) Window rule.}
Let $\mathcal F$ contain the emitted prefix and all randomness used to generate it.
Under the rule of Appendix~\ref{app:proofs-windows}, position $i$ of a round reads either the keyed values of a window that was raced neither before the round nor at an earlier position of the round, or freshly drawn values.
Keyed values at a window that no race has read are i.i.d.\ uniform and independent of the values at other windows and of $\mathcal F$, and the copula noise is drawn anew at every position.
So \revdel{for every sequence of drafts,} the values read at positions $1, \dots, N+1$ are \revdel{independent,} i.i.d.\ \revdel{uniform,} \rev{uniform} and independent of $\mathcal F$, \revdel{which is all that} \rev{as} (i) and (ii) \revdel{use.} \rev{require.}
\end{proof}

\paragraph{What the detector reads.}
A kept draft $d_i$, $i < I$, has lane length $E_{i,d_i} = \tau_i q_{i,d_i}$, where $\tau_i$ is $\Exp(1)$ and independent of $(I, y)$ given the drafts, because the relay reads only leftovers.
Kept drafts therefore carry the signal of their draft races.
The last token has lane length $E_{I,y} = \tau_I q_{I,y} + \pi_{I,y}\, T$, with $T \sim \Exp(1)$ the winning time of the relay, and carries the signal of the relay, which includes the choice of where to stop.
\revdel{Every emitted token is thus selected by a race on its own keyed values, so the detector is the one used without speculative decoding: the plain sum of scores, which does not need to locate the ends of rounds.

In uniform form, \eqref{eq:relay} is the race $\argmax_{(i,v)} (R'_{i,v})^{1/\pi_{i,v}}$ on the rescaled values $R'_{i,v} = \tilde R_v(\mathbf w_i) / \tilde R_{d_i}(\mathbf w_i)^{q_{i,v}/q_{i,d_i}}$ for $i \le N$ and $R'_{N+1,v} = \tilde R_v(\mathbf w_{N+1})$, which extends the rescaling of \eqref{eq:replace} to all positions.}
\rev{Every emitted token is selected by a race on its own keyed values, so the detector is unchanged: it sums the scores and needs no round boundaries.}

\begin{remark}[Racing on raw lane lengths is biased]
\label{rem:naive}
Racing on the raw lane lengths $E_{i,v}$ instead of the leftovers uses $E_{i,v} = \tau_i q_{i,v} + (E_{i,v} - \tau_i q_{i,v})$.
The extra term $\tau_i q_{i,v}$ penalises the tokens the drafter liked and the positions whose draft race ended late, while the bonus position has no such term.
With one draft, $\vec q_1 = (0.6, 0.3, 0.1)$, $\vec p_1 = (0.2, 0.4, 0.4)$ and $d_1$ the first token, the residual race stops at position 1 with probability $2/3$ and then emits the third token with probability $r_{1,3} = 3/4$.
On the raw lane lengths, the race stops there with probability $85/144 \approx 0.59$ and emits the third token with probability $15/17 \approx 0.88$.
The same bias affects the single-position rule that re-runs the race with weights $\vec r_1$ on the raw values after a rejection by a fresh coin: it picks the third token 86\% of the time.
\end{remark}

\begin{remark}[Token-level verification]
\label{rem:tokenlevel}
Token-level verification keeps draft $i$ with probability $a_i = \min(1, p_{i,d_i}/q_{i,d_i})$ and, at the first rejection, emits a token from the residual $\vec r^{\,\mathrm{tok}}_i$ of \eqref{eq:residual} at that position.
Given the drafts, its outcome has the law
\[
\pi^{\mathrm{tok}}_{i,v} = \Big(\prod_{j < i} a_j\Big) (1 - a_i)\, r^{\mathrm{tok}}_{i,v} \quad (i \le N), \qquad \pi^{\mathrm{tok}}_{N+1,v} = \Big(\prod_{j \le N} a_j\Big)\, p_{N+1,v}.
\]
If $a_i = 1$, the factor $1 - a_i$ vanishes; otherwise $p_{i,d_i} < q_{i,d_i}$ and $r^{\mathrm{tok}}_{i,d_i} = 0$.
Either way, this law also vanishes on the draft lanes.
The proof of \autoref{prop:residual} then applies with $\vec \pi^{\mathrm{tok}}$ in place of $\vec \pi$, and with the validity of token-level verification~\citep{leviathan2023fast,chen2023accelerating} in place of that of block verification: the race \eqref{eq:relay} on $\vec \pi^{\mathrm{tok}}$ keeps token-level verification's law without a coin.
\revdel{An earlier version of \tsii instead keeps draft $i$ if a freshly drawn $C_i \sim \Unif$ satisfies $C_i < a_i$, replaces the first rejected draft by \eqref{eq:replace} at its position, and draws the bonus token by the race \eqref{eq:race} on $\vec p_{N+1}$.
Rejection depends only on $(d_i, C_i)$, so by \autoref{lem:race} the replacement follows $\vec r^{\,\mathrm{tok}}_i$.
The emitted token is then $v$ as a kept draft with probability $\min(p_{i,v}, q_{i,v})$ and as a replacement with probability $(p_{i,v} - q_{i,v})_+$, which total $p_{i,v}$, at acceptance rate $1 - \TV(\vec p_i, \vec q_i)$.
In that scheme, the drafts after a rejection influence no emitted token, so a draft and its replacement count as one use of a window and discarded drafts count as none.}
\end{remark}

\subsection{Repeated windows}
\label{app:proofs-windows}
\revdel{
This subsection states the repeated-window rules of \autoref{sec:background} and \autoref{sec:method} and shows that they keep the generated text single-sequence distortion-free and the detection tests of \autoref{subsec:detection} valid.}

\paragraph{The rules.}
At generation, a key's values at a given window serve at most one emitted token per answer, and a position whose window was already used draws its values afresh, which is plain sampling (\autoref{app:details}).
Under speculative decoding, a window counts as used once a round races it, whether or not its position is kept.
The positions after $I$ are discarded, but their keyed values still shaped the outcome: the relay reads the leftovers of every position, and $\Prob(I = i)$ depends on the later drafts.
A position whose window was already raced, earlier in the text or earlier in the same round, therefore uses freshly drawn values, so the positions of a round have independent values.
A round races $N+1$ windows but emits only $I$ tokens, so more positions fall back to fresh values than without speculative decoding.
Without speculative decoding, \tsii can instead continue a window's race at each later visit, as the \texttt{ResidualRace} of AutoMark does~\citep{gloaguen2026automark}: it keeps the leftovers of the window's last race and gives the token that won it a freshly drawn exponential.
At detection, each (window, token) pair is scored once, at its first occurrence.

\paragraph{The whole text is distortion-free.}
Reusing a window's keyed values re-runs the same race: each token still follows the model given its prefix, averaged over keys, but tokens at repeated windows become dependent, so the law of the whole answer is not the model's.
Under the rules, each position reads keyed values that no earlier race has read, or freshly drawn values, which are independent of the prefix and of the randomness that produced it.
Each token then follows the model given its prefix, and the chain rule over positions gives the model's law for the whole answer; under speculative decoding, this is part (iii) of \autoref{prop:residual}, whose proof uses exactly this independence.
With leftover continuation, given the earlier visits of a window, the leftovers of its losing tokens are again i.i.d.\ $\Exp(1)$ by \autoref{lem:race}, and the winner's fresh value is independent of them, so every visit is distortion-free given the \revdel{past, jointly as well.} \rev{past.}

\paragraph{The detection tests stay valid.}
Under $\Hnull$, the text does not depend on the key, so with an ideal PRF the scores \eqref{eq:scores} of distinct (window, token) pairs are i.i.d.\ $\Exp(1)$ given the text, while a repeated pair only repeats its score.
Scoring each pair once leaves $n$ distinct pairs, so $S$ in \eqref{eq:sumtest} follows $\mathrm{Gamma}(n, 1)$ exactly, and distinct tokens at the same window still give independent scores.
The score of a pair depends only on the key, the window and the token, so it is the same at every occurrence, under either generation rule.
When cropping or editing changes which occurrence of a pair comes first, the detector therefore reads the same score for that pair.

\subsection{Acceptance}
\label{app:proofs-accept}

\revdel{This subsection shows that the residual race keeps the best acceptance available, with one draft and with several.}
For $N = 1$, it keeps the draft with probability $1 - \TV(\vec p_1, \vec q_1)$, and no scheme with a single draft at a position does better while preserving the output distribution.

\begin{proposition}[Acceptance bound]
\label{prop:accept}
At one position, let a scheme draft $D$ and emit $O$, and let $\tilde q$ and $\tilde p$ be their laws given the key, with $\E[\tilde q] = \vec q$ and $\E[\tilde p] = \vec p$ over keys.
Then $\Prob(\mathrm{accept}) \le 1 - \E[\TV(\tilde p, \tilde q)] \le 1 - \TV(\vec p, \vec q)$, and the second inequality is strict iff for some $v$, $\tilde p_v - \tilde q_v$ takes both signs with positive probability.
Verifying on watermarked laws (accepting $D$ with probability $\min(1, \tilde p_D/\tilde q_D)$) attains the first bound; the raw test with an independent coin, and the residual race with $N = 1$, attain the second.
\end{proposition}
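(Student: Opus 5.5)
The proof conditions on the key and reduces each inequality to a coupling fact.

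\emph{First inequality.} Fix the key $K$, so that the public state before the position is fixed. Given $K$, the draft $D$ and the output $O$ are two random variables with marginals $\tilde q$ and $\tilde p$. Acceptance means that the emitted token equals the draft, so $\Prob(\mathrm{accept}\mid K) \le \Prob(D = O \mid K)$. For any coupling of two laws on $\V$, the coupling inequality gives $\Prob(D = O) \le \sum_v \min(\tilde p_v, \tilde q_v) = 1 - \TV(\tilde p, \tilde q)$. Averaging over keys yields $\Prob(\mathrm{accept}) \le 1 - \E[\TV(\tilde p,\tilde q)]$.

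\emph{Second inequality and strictness.} Write $\TV(\tilde p,\tilde q) = \sum_v (\tilde p_v - \tilde q_v)_+$. The map $x \mapsto x_+$ is convex, so Jensen's inequality gives $\E[(\tilde p_v-\tilde q_v)_+] \ge (p_v - q_v)_+$ for each $v$. Summing over $v$ gives $\E[\TV(\tilde p,\tilde q)] \ge \TV(\vec p,\vec q)$.
\begin{itemize}
\item Jensen for $x_+$ is an equality iff the variable lies almost surely on one side of $0$, that is, iff $\tilde p_v - \tilde q_v \ge 0$ a.s.\ or $\le 0$ a.s.
\item The sum is an equality iff every term is. So the inequality is strict iff some $v$ has $\tilde p_v-\tilde q_v$ taking both signs with positive probability.
\item A finite vocabulary means there are finitely many terms, so a single strict term makes the sum strict.
\end{itemize}

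\emph{Attainment of the first bound.} Accepting $D\sim\tilde q$ with probability $\min(1,\tilde p_D/\tilde q_D)$ is standard speculative sampling applied to the laws $\tilde q,\tilde p$ at fixed key. It therefore accepts with probability $\sum_v \min(\tilde p_v,\tilde q_v)$ given $K$, and it emits $\tilde p$ through the residual of \eqref{eq:residual}~\citep{leviathan2023fast}. Averaging over keys gives the first bound.

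\emph{Attainment of the second bound.} The raw test with an independent coin accepts $D$ with probability $\min(1, p_D/q_D)$.
\begin{itemize}
\item Given $K$, the draft has law $\tilde q$. The acceptance function does not depend on the key and is linear in the draft's law, so averaging over keys gives $\sum_v q_v \min(1,p_v/q_v) = 1-\TV(\vec p,\vec q)$.
\item The residual race with $N=1$ has, given the draft, the law $\vec\pi$ of block verification by \autoref{prop:residual}(i).
\item For $N=1$, block verification is token-level verification (Appendix~\ref{app:proofs-block}). Its acceptance probability is $\Prob(I=2) = \E[\min(1,p_{d_1}/q_{d_1})]$ with $d_1\sim\vec q$ by \autoref{lem:race}, which is again $1-\TV(\vec p,\vec q)$.
\end{itemize}

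\emph{Main obstacle.} The delicate point is the first step: making precise what ``accept'' means for a general scheme. I would define it as the event that the emitted token is the drafted one ($O = D$), or contain it in that event. This must be set up at fixed key, where $D$ and $O$ can be arbitrarily coupled through shared randomness, and that is exactly the setting where the coupling inequality applies.
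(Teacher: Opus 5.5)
Your proposal is correct and follows the paper's proof essentially step for step: the coupling bound $\Prob(O=D)\le\sum_v\min(\tilde p_v,\tilde q_v)$ at fixed key, Jensen for $y\mapsto y_+$ applied term by term with the both-signs equality case, and attainment via speculative sampling on the watermarked laws, linearity in the draft's law for the raw coin, and $\Prob(I=2)=c_1=\min(1,p_{D}/q_{D})$ from Proposition~\ref{prop:residual} for the residual race with $N=1$. The only step the paper proves that you merely assert is the equality case of Jensen: with $Y=\tilde p_v-\tilde q_v$, strictness follows from $Y_+>Y$ on $\{Y<0\}$ and $Y_+>0$ on $\{Y>0\}$, and your appeal to a finite vocabulary there is not needed.
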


\begin{proof}
We adapt the sum-over-min step and the expected-TV step from the proof of Theorem 1 of \citet{hu2024inevitable}, with the second step proved there through the dual form of TV.
Fix the key, and take probabilities given it.
An accepted draft is the emitted token, and $(D, O)$ is a coupling of $\tilde q$ and $\tilde p$, so
\[
\Prob(\mathrm{accept}) \le \Prob(O = D) = \sum_v \Prob(D = O = v) \le \sum_v \min(\tilde p_v, \tilde q_v) = 1 - \TV(\tilde p, \tilde q).
\]
Averaging over keys gives the first inequality.

For the second, let $Y_v = \tilde p_v - \tilde q_v$, so that $\TV(\tilde p, \tilde q) = \sum_v (Y_v)_+$ and $\E[Y_v] = p_v - q_v$.
Since $y \mapsto y_+$ is convex, Jensen's inequality gives $\E[(Y_v)_+] \ge (p_v - q_v)_+$ for each $v$.
Summing over $v$ gives the second inequality, which is strict iff one of these terms is.
If $Y_v$ takes both signs with positive probability, then $(Y_v)_+ > Y_v$ on $\lbrace Y_v < 0\rbrace$ and $(Y_v)_+ > 0$ on $\lbrace Y_v > 0\rbrace$, so $\E[(Y_v)_+] > \max(\E[Y_v], 0)$ and the term is strict.
Otherwise $(Y_v)_+$ is almost surely $Y_v$ or almost surely $0$, and the term is an equality.

At a fixed key, verifying on the watermarked laws is speculative sampling with draft law $\tilde q$ and target law $\tilde p$, which accepts with probability $\sum_v \tilde q_v \min(1, \tilde p_v/\tilde q_v) = 1 - \TV(\tilde p, \tilde q)$.
The raw accept test with an independent coin accepts with probability $\sum_v q_v \min(1, p_v/q_v) = 1 - \TV(\vec p, \vec q)$, since the draft follows $\vec q$ over keys.
Given the draft, the residual race with $N = 1$ keeps it with probability $\Prob(I = 2) = c_1 = \min(1, p_D/q_D)$ by \autoref{prop:residual}, the same as the raw test.
\end{proof}

For $N > 1$, block verification decodes at least as many tokens in expectation as any \revdel{verification} \rev{valid draft verification} \revdel{procedure that preserves the target's law~\citep[Theorem~2]{sun2025block},} \rev{algorithm~\citep[Theorem~2]{sun2025block},} and the residual race \revdel{has} \rev{inherits} \revdel{the} \rev{it} \revdel{same acceptance by \autoref{prop:residual}.} \rev{(\autoref{prop:residual}).}

\subsection{Entropy removed}
\label{app:proofs-entropy}
\revdel{
This subsection proves \autoref{thm:monotone}: the entropy that the key removes is continuous and strictly increasing in $\rho$.}

Let $\pi_\rho(\cdot \mid G^\star)$ be the law of the emitted token $X$ given the keyed Gaussians, over the newly drawn noise. As in textGrain~\citep{li2026textgrain}, the entropy the key removes is
\begin{equation}
\label{eq:delta}
\Delta(\rho; \vec p) = I(G^\star; X) = H(\vec p) - \E_{G^\star}\Big[H\big(\pi_\rho(\cdot \mid G^\star)\big)\Big].
\end{equation}

\begin{theorem}[The coupling sets the entropy removed]
\label{thm:monotone}
\rev{If $H(\vec p) > 0$, then }$\rho \mapsto \Delta(\rho; \vec p)$ is continuous and strictly increasing on $[0,1]$, from $\Delta(0;\vec p) = 0$ to $\Delta(1;\vec p) = H(\vec p)$.
Hence for every target $\delta \in [0, H(\vec p)]$ there is a unique $\rho^\star(\delta; \vec p)$ with $\Delta(\rho^\star; \vec p) = \delta$, and running the race with $\rho_t = \rho^\star(\delta_t; \vec p^{(t)})$ removes exactly $\delta_t$ at every previously unused window, keeps the token law $\vec p^{(t)}$, and leaves the detector unchanged.
\end{theorem}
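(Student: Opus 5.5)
Since $H(\vec p)$ does not depend on $\rho$, I would prove the statement by studying $\rho \mapsto \E_{G^\star}[H(\pi_\rho(\cdot\mid G^\star))]$, the conditional entropy of $X$ given the keyed Gaussians, and showing it is continuous and strictly decreasing. Only the support of $\vec p$ matters, so take finitely many candidates with $p_v>0$; $H(\vec p)>0$ means at least two of them.

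\emph{Endpoints and continuity.} At $\rho=0$ the race input $Z_v=\eta_v$ does not depend on $G^\star$. By \autoref{lem:copula}, $\pi_0(\cdot\mid G^\star)=\vec p$, so $\Delta(0;\vec p)=0$. At $\rho=1$, $X=\argmin_v -\ln\Phi(G^\star_v)/p_v$ is a deterministic function of $G^\star$ (almost surely no ties), so the conditional entropy vanishes and $\Delta(1;\vec p)=H(\vec p)$. For continuity, write $\pi_\rho(v\mid g)=\Prob_\eta\big(v \text{ wins with } Z=\rho g+\sqrt{1-\rho^2}\,\eta\big)$. For $\rho<1$ this is a Gaussian integral of an indicator whose boundary has measure zero, so it is continuous in $\rho$; at $\rho\to1$ it converges to the indicator for almost every $g$. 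Entropy is bounded by $\ln|\mathrm{supp}\,\vec p|$ and continuous on the simplex, so dominated convergence gives continuity of $\Delta$ on $[0,1]$.

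\emph{Strict monotonicity.} I would use a data-processing argument on a Gaussian channel. For $\rho<\rho'$, realise the weaker coupling as a degradation of the stronger one. Write $\rho G^\star+\sqrt{1-\rho^2}\,\eta = \rho' W + \sqrt{1-\rho'^2}\,\eta'$ with $W=\frac{\rho}{\rho'}G^\star+\sqrt{1-\rho^2/\rho'^2}\,\xi$, where $\xi$ is fresh standard normal noise. The Gaussian $W$ is i.i.d.\ $\mathcal N(0,1)$ across $v$ and is a noisy version of $G^\star$. Then $X$ at strength $\rho$ is distributed as $X$ at strength $\rho'$ driven by the virtual key $W$, giving the Markov chain $G^\star\to W\to X$. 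Hence $\Delta(\rho)=I(G^\star;X)\le I(W;X)=\Delta(\rho')$, the last equality holding because $W$ has the same law as $G^\star$.

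\emph{Strictness (the main obstacle).} Equality in data processing requires $I(W;X\mid G^\star)=0$, i.e.\ that $X$ be independent of the extra noise $\xi$ given $G^\star$. I would rule this out by showing that $\pi_{\rho'}(v\mid W)$ is a non-constant function of $W_v$ for every $\rho'>0$ and every $v$ with $0<p_v<1$. Indeed, raising $W_v$ strictly increases $v$'s winning probability: the lane length $-\ln\Phi(\cdot)$ is decreasing and Gaussian noise has full support. Given $G^\star$, the coordinate $W_v$ still has nondegenerate Gaussian spread from $\xi_v$ whenever $\rho<\rho'$, so $\Prob(X=v\mid W)$ genuinely varies with $\xi$ and the conditional mutual information is positive. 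The care needed is to make this positivity rigorous, in particular at $\rho=0$ and at $\rho'=1$, where the channel becomes deterministic. For $\rho'=1$ I would compare directly: $\Delta(1)=H(\vec p)$, while $X$ is not a function of $G^\star$ when $\rho<1$, so $\Delta(\rho)<H(\vec p)$.

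\emph{Conclusion.} Strict monotonicity and continuity, by the intermediate value theorem, give the unique inverse $\rho^\star(\delta;\vec p)$. For the remaining claims, at an unused window \autoref{lem:copula} keeps the token law $\vec p^{(t)}$ for any $\rho_t$, since $\rho_t$ depends only on the prefix. The scores \eqref{eq:scores} do not involve $\rho_t$, so the detector is unchanged.
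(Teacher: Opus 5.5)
Your proposal is correct and follows essentially the same route as the paper: endpoints, continuity by dominated convergence, monotonicity by realising the weaker coupling as a Gaussian degradation of the key followed by data processing, and strictness via positivity of the conditional mutual information $I(W;X\mid G^\star)$ from the chain rule. The only difference is how you show that the channel law varies with the degraded key. You use strict monotonicity of $v$'s win probability in $W_v$ and treat $\rho'=1$ separately, whereas the paper sends $A'_v\to+\infty$ for two tokens; both close the argument, and yours also covers the lower endpoint $\rho=0$ that the paper's $0<\rho'$ setup leaves implicit.
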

Under speculative decoding, $\rho_t$ may depend on the draft distribution $\vec q^{(t)}$ but not on the target distribution $\vec p^{(t)}$: the drafter fixes its race inputs $\tilde R_v$, and hence $\rho_t$, before the target computes $\vec p^{(t)}$.

\begin{proof}
Write $A = G^\star$, $Z = \rho A + \sqrt{1-\rho^2}\,\eta$, and $\Delta(\rho) = I(A; X)$.

\emph{Step 1: endpoints.}
At $\rho = 0$, $Z = \eta$ is independent of $A$, so $X$ depends only on $\eta$ and $\Delta = 0$.
At $\rho = 1$, $Z = A$, so $X$ is a function of $A$ and $\Delta = H(X) = H(\vec p)$, since $X \sim \vec p$ (\autoref{lem:copula}).

\emph{Step 2: monotone.}
Take $0 < \rho' < \rho \leq 1$ and let $c = \rho'/\rho$.
Draw an independent $\xi \sim \mathcal N(0, I)$ and form the degraded key $A' = cA + \sqrt{1-c^2}\,\xi$, which is standard Gaussian.
Run the race at coupling $\rho$ on $A'$, giving $\tilde X$.
Since
\[
\rho A' + \sqrt{1-\rho^2}\,\eta = \rho' A + \sqrt{\rho^2 - \rho'^2}\,\xi + \sqrt{1-\rho^2}\,\eta,
\]
and the last two terms form a Gaussian of variance $(\rho^2 - \rho'^2) + (1-\rho^2) = 1 - \rho'^2$ independent of $A$, the pair $(A, \tilde X)$ has the law of $(A, X)$ at coupling $\rho'$, while $(A', \tilde X)$ has the law at coupling $\rho$.
As $A \to A' \to \tilde X$ is a Markov chain, data processing gives $\Delta(\rho') = I(A; \tilde X) \leq I(A'; \tilde X) = \Delta(\rho)$.

\emph{Step 3: strict.}
The chain rule gives $I(A';\tilde X) - I(A;\tilde X) = I(A'; \tilde X \mid A)$.
It vanishes only if the law of $\tilde X$ given $(A, A')$ depends on $A$ alone.
Given $A$, $A'$ still has a density that is positive everywhere, so this law would have to be constant in $A'$.
It is not: \rev{for a token $v$ with $p_v > 0$, }sending $A'_v \to +\infty$ drives the race input $\Phi(\rho A'_v + \sqrt{1-\rho^2}\,\eta_v)$ to 1, so $v$ wins almost surely\rev{, and since $H(\vec p) > 0$ the same holds for a second token $w$ with $p_w > 0$}. Hence $I(A'; \tilde X \mid A) > 0$ and the inequality is strict.

\emph{Step 4: continuous.}
Write $\Delta(\rho) = H(\vec p) - \E_A[H(\pi_\rho(\cdot \mid A))]$, with $H(\vec p)$ fixed.
For \revdel{each fixed}\rev{almost every} $A$, the race inputs move continuously with $\rho$, and the argmax is almost surely unique, so the conditional law $\pi_\rho(\cdot \mid A)$ moves continuously.
Entropy is continuous and bounded by $\ln |\V|$, so dominated convergence makes the expectation, hence $\Delta$, continuous.

\emph{Conclusion.}
$\rho \mapsto \Delta(\rho)$ is a continuous strictly increasing bijection from $[0,1]$ onto $[0, H(\vec p)]$.
Every $\delta$ in range is therefore hit by exactly one $\rho^\star(\delta; \vec p)$.
A prefix-dependent $\rho_t$ keeps the token law by \autoref{lem:copula}, applied given the prefix, and the scores do not involve $\rho_t$, so the detector is unchanged.
\end{proof}

\section{Near-optimality of the copula race}
\label{app:optimal}

\revdel{This appendix bounds the detection evidence that any distortion-free coupling can extract from the entropy it removes, and shows that the copula race comes close to this bound (\autoref{subsec:entropy}).
}
\paragraph{Setup.} Fix a next-token distribution $\vec p$ and a detection score $s$ applied to the emitted token's keyed Gaussian, with mean $\mu_0$ and standard deviation $\sigma_0$ under the null (unwatermarked sampling). A coupling $\pi(x \mid g)$ between the keyed Gaussians $g$ and the emitted token $x$ is assessed by its per-token signal-to-noise ratio $\mathrm{SNR} = (\E[s(g_X)] - \mu_0)/\sigma_0$ and its cost $\Delta = I(g; X)$, the entropy removed \eqref{eq:delta}.

\paragraph{An upper bound for every scheme.}
\begin{proposition}[No free evidence]
\label{prop:kl}
For any distortion-free coupling of keyed Gaussians $G$ and token $X$, let $Q$ be the law of the emitted token's Gaussian $G_X$. Then $\mathrm{KL}\big(Q \,\Vert\, \mathcal N(0,1)\big) \le I(G; X) = \Delta$, and the inequality is strict whenever $\Delta > 0$.
\end{proposition}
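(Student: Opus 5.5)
The plan is a data-processing argument followed by a direct comparison. Let $G = (G_v)_{v\in\V}$ be i.i.d.\ $\mathcal N(0,1)$ and $X$ the emitted token. Distortion-freeness means the marginal of $X$ is $\vec p$. I will write the law of the emitted Gaussian as a mixture over the token,
\[
Q = \sum_v p_v\, Q_v,
\]
where $Q_v$ is the law of $G_v$ given $X=v$. Under independence (no watermark), the law of $G_v$ given $X=v$ is still $\mathcal N(0,1)$, since $G_v$ is then independent of $X$. The idea is to compare $Q$ with $\mathcal N(0,1)$ through this decomposition.

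\emph{Step 1: chain rule and data processing.} Since $X\sim\vec p$,
\[
I(G;X) = \sum_v p_v\, \mathrm{KL}\big(P_{G\mid X=v} \,\Vert\, P_G\big),
\]
where $P_G$ is the product Gaussian. Projecting onto the coordinate $G_v$ is a measurable map, so data processing gives $\mathrm{KL}(P_{G\mid X=v}\Vert P_G) \ge \mathrm{KL}(Q_v \Vert \mathcal N(0,1))$. Hence $\Delta \ge \sum_v p_v\,\mathrm{KL}(Q_v\Vert\mathcal N(0,1))$.

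\emph{Step 2: convexity.} KL is jointly convex, and the reference $\mathcal N(0,1)$ is the same for every $v$, so
\[
\mathrm{KL}(Q\Vert\mathcal N(0,1)) \le \sum_v p_v\,\mathrm{KL}(Q_v\Vert\mathcal N(0,1)).
\]
Combining the two steps gives the inequality. Equivalently, one can view this as a single data-processing step applied to the joint law of $(X,G)$ mapped to $G_X$, compared with the product of $\vec p$ and $P_G$ under the same map, which is the null law $\mathcal N(0,1)$.

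\emph{Step 3: strictness (the main obstacle).} Assume $\Delta>0$. I want to show that one of the two inequalities is strict. The projection step is an equality for token $v$ only if the likelihood ratio $dP_{G\mid X=v}/dP_G$ depends on $G$ through $G_v$ alone. Call this ratio $f_v(G_v)$, so that $\Prob(X=v\mid G) = p_v f_v(G_v)$.

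Summing over $v$ imposes $\sum_v p_v f_v(G_v)=1$ almost surely, with the coordinates independent. A function of independent coordinates that is a sum of one-variable terms can be constant only if each term is almost surely constant. Hence every $f_v\equiv 1$, which means $X$ is independent of $G$ and $\Delta=0$, a contradiction.

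There is one gap to handle: tokens with $p_v>0$ but a degenerate conditional. Zero-probability tokens contribute nothing to the sum, so the argument above covers all cases. Therefore equality in Step 1 forces $\Delta=0$, and Step 1 is strict whenever $\Delta>0$, which gives the strict inequality in the statement. I expect the delicate point to be making the "additive function of independent coordinates is constant" step rigorous. My approach is to take conditional expectations given all coordinates but one: this shows each $f_v(G_v)$ is almost surely equal to a constant, and that constant must be $1$ because $\E f_v = 1$.
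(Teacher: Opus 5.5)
Your proposal is correct and follows essentially the paper's argument. The paper applies data processing once to the map $(g,x)\mapsto g_x$, which you note is equivalent to your projection-plus-convexity split, and obtains strictness from the same equality condition $\pi(x\mid g)/p_x = f_x(g_x)$, with $\sum_x p_x f_x(g_x)=1$ forcing each summand to be constant; the paper does this through ``the variance of the sum is the sum of the variances'' (valid since $f_x \le 1/p_x$ is bounded), where your conditional-expectation argument does the same job.
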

\revdel{The left-hand side is the expected log-likelihood ratio per token in favour of the watermark -- the rate at which detection evidence accumulates over tokens. The proposition thus says that each nat of entropy removed buys at most (in fact, strictly less than) one nat of evidence: detection efficiency is fundamentally capped by the diversity budget.}
\rev{The left-hand side is the expected log-likelihood ratio per token, i.e.\ the rate at which detection evidence accumulates, so each nat of entropy removed yields less than one nat of evidence.}

\begin{proof}
Under $\pi$, the KL to the product of its marginals is the mutual information: $\mathrm{KL}(\pi \Vert P_G \otimes \vec p) = I(G;X) = \Delta$.
The map $(g, x) \mapsto g_x$ reads off the emitted token's Gaussian.
It sends $P_G \otimes \vec p$ to $\mathcal N(0,1)$, since every coordinate of $G$ is standard normal, and $\pi$ to $Q$.
Data processing gives the bound.

For strictness, suppose equality holds with $\Delta > 0$.
Equality in data processing requires the likelihood ratio to be a function of the statistic: $\pi(x \mid g)/p_x = f_x(g_x)$ for some $f_x$.
Summing over $x$ gives $\sum_x p_x f_x(g_x) = 1$ for all $g$.
This is a sum of functions of separate independent coordinates, so each summand is constant: the variance of the sum is the sum of the variances, and it is zero.
Hence each $f_x$ is constant, and $\E_g\, \pi(x \mid g) = p_x$ makes it 1.
Then $\pi(x \mid g) = p_x$ and $\Delta = 0$.
\end{proof}

\paragraph{The best coupling.} Maximising $\E[s(g_X)]$ subject to $X \sim \vec p$ and $I(g;X) \le \Delta$ has the Lagrangian $\E_\pi[s] - \lambda\,\mathrm{KL}(\pi \Vert P_g \otimes \vec p)$, maximised by
\[
\pi(x \mid g) \propto a_x\, e^{s(g_x)/\lambda}, \qquad a \text{ set so that } X \sim \vec p,
\]
an exponential tilt, i.e.\ entropic optimal transport, which textGrain approximates per token.

\paragraph{Weak-watermark limit.} Write $\pi(x \mid g) = p_x\big(1 + \varepsilon\, \psi_x(g)\big)$ with $\E_g \psi_x = 0$ and $\sum_x p_x \psi_x = 0$.
The first constraint keeps the token marginal at $\vec p$; the second keeps $\pi(\cdot \mid g)$ normalized.
To second order, the cost is $\Delta \approx \tfrac12 \varepsilon^2 \sum_x p_x \E \psi_x^2$.
The signal is $\varepsilon \sum_x p_x \E[s(g_x)\psi_x]$.
The best perturbation is $\psi_x = \bar s(g_x) - \sum_v p_v \bar s(g_v)$, with $\bar s = s - \mu_0$, and it gives, for every score,
\[
\mathrm{SNR} \le \sqrt{2\Delta\,\big(1 - \textstyle\sum_v p_v^2\big)}.
\]
\paragraph{Copula efficiency.} For the copula, $\Delta \approx \tfrac12 J(\vec p)\rho^2$ with $J(\vec p) = \sum_v p_v\, e(1/p_v)^2/(1-p_v)$, where $e(n)$ is the mean maximum of $n$ standard normals\rev{, extended to real $n$ as the mean of the law with CDF $\Phi^n$}.
\revdel{The signal}\rev{Its $\mathrm{SNR}$} is $\rho\,\mathrm{Cov}(s(g), g) \sum_v p_v e(1/p_v)/\sigma_0$.
Its efficiency, the ratio of its SNR to the bound, is
\[
\eta(\vec p) = \mathrm{corr}\big(s(g), g\big) \cdot \frac{\sum_v p_v\, e(1/p_v)}{\sqrt{J(\vec p)\,(1 - \sum_v p_v^2)}} \le 1
\]
by Cauchy--Schwarz, with equality iff $s$ is \revdel{linear}\rev{increasing and affine} in $g$ and $\vec p$ is uniform on its support.
Extending $e(n)$ to non-integer $n$ is accurate: an exact computation from the score identity $\partial \Prob(X = x)/\partial \mu_v = \E[\mathbb 1\{X = x\}\, \eta_v]$ agrees with $\eta(\vec p)$ to within 0.001 on uniform, Zipf, and peaked distributions.

\revdel{\paragraph{Numbers.} For 8 equiprobable tokens, the copula achieves $\mathrm{KL}/\Delta = 0.85$, $0.79$ and $0.71$ at $\rho = 0.5$, $0.8$ and $0.95$ respectively, against a weak-limit ceiling factor of $1 - \sum_v p_v^2 = 0.88$. Over 5{,}169 positions of 60 Qwen3.5-9B answers with at least two candidates (after top-$p$), the weak-limit efficiency $\eta$ with the Gaussian score averages 0.966 -- 0.971 when weighted by each position's capacity $1 - \sum_v p_v^2$ -- with a 10th percentile of 0.94. Rescoring generated Qwen3.5-9B answers under the copula and \tsii with the Gaussian score instead of \ts's score changes detection by at most 2 percentage points, so the choice of score is not the bottleneck. At finite strength, \autoref{tab:optimal} compares the copula with the optimal coupling at equal $\Delta$ (computed by Sinkhorn on 4000 sampled keys, a discretisation that favours the optimum): the copula attains 91--100\% of the optimal signal-to-noise ratio for $\rho \geq 0.5$.}
\rev{\paragraph{Numbers.} For 8 equiprobable tokens, the copula achieves $\mathrm{KL}/\Delta = 0.85$, $0.79$ and $0.71$ at $\rho = 0.5$, $0.8$ and $0.95$ respectively, against a weak-limit ceiling factor of $1 - \sum_v p_v^2 = 0.88$. Over 5{,}169 positions of 60 Qwen3.5-9B answers with at least two candidates (after top-$p$), the weak-limit efficiency $\eta$ with the Gaussian score averages 0.966 (0.971 when weighted by each position's capacity $1 - \sum_v p_v^2$), with a 10th percentile of 0.94. Rescoring generated Qwen3.5-9B answers under the copula and \tsii with the Gaussian score instead of \ts's score changes detection by at most 2 percentage points. At finite strength, \autoref{tab:optimal} compares the copula with the optimal coupling at equal $\Delta$ (computed by Sinkhorn on 4000 sampled keys, a discretisation that favours the optimum): the copula attains 91--100\% of the optimal signal-to-noise ratio at $\rho = 0.5$, $0.8$ and $0.95$.}

\begin{table}[ht]
\centering
\footnotesize
\caption{Copula race vs the best distortion-free coupling at equal entropy removed: ratio of per-token signal-to-noise ratios, for the Gaussian score $s = g$ and \ts's score $s = -\ln(1 - u)$, at couplings $\rho = 0.5 / 0.8 / 0.95$ (the fraction of each token's entropy removed by the key is then about 0.16 / 0.45 / 0.72).}
\label{tab:optimal}
\begin{tabular}{lcc}
\toprule
Next-token distribution & Gaussian score & \ts score \\
\midrule
uniform over 8 tokens & 0.998 / 0.998 / 1.000 & 0.915 / 0.962 / 0.992 \\
Zipf(1.1) over 24 tokens & 0.982 / 0.988 / 0.993 & 0.908 / 0.960 / 0.989 \\
top token 0.5, Zipf tail & 0.966 / 0.978 / 0.987 & 0.911 / 0.959 / 0.990 \\
top token 0.8, Zipf tail & 0.929 / 0.938 / 0.942 & 0.908 / 0.952 / 0.973 \\
\bottomrule
\end{tabular}
\end{table}

\section{Entropy removed by a constant coupling}
\label{app:entropy}

\revdel{This appendix measures the fraction of each token's entropy that a fixed $\rho$ removes, on synthetic next-token distributions.

\autoref{tab:entropy_frac} reports the fraction of entropy removed, $\Delta(\rho; \vec p)/H(\vec p)$, computed by quadrature over the keyed Gaussians (640 sampled keys per value), for synthetic next-token distributions: uniform over $n$ tokens, a flat Zipf(0.8) law over 100 tokens, and a top token of probability $p_{\max}$ with a Zipf(1.1) tail over 99 further tokens. The fraction depends only weakly on the shape: a fixed $\rho$ removes 0.47--0.51 of the entropy at $\rho = 0.8$ and 0.63--0.65 at $\rho = 0.9$ on the many-token shapes, and about 0.1 less in the extreme case of two equiprobable tokens. A single $\rho$ therefore acts as an approximately shape-independent budget fraction, which is what \autoref{subsec:entropy} relies on.}
\rev{\autoref{tab:entropy_frac} reports the fraction of entropy removed, $\Delta(\rho; \vec p)/H(\vec p)$, computed by quadrature over the keyed Gaussians (640 sampled keys per value; exact for two tokens), for synthetic next-token distributions: uniform over $n$ tokens, a flat Zipf(0.8) law over 100 tokens, and a top token of probability $p_{\max}$ with a Zipf(1.1) tail over 99 further tokens. The fraction depends only weakly on the shape: a fixed $\rho$ removes 0.47--0.49 of the entropy at $\rho = 0.8$ and 0.63--0.65 at $\rho = 0.9$ on the many-token shapes, and about 0.1 less in the extreme case of two equiprobable tokens. A single $\rho$ therefore removes a nearly shape-independent fraction of the entropy, as \autoref{subsec:entropy} uses.}

\begin{table}[t!]
\centering
\footnotesize
\caption{Fraction of each token's entropy removed at a fixed coupling $\rho$, on synthetic distributions.}
\label{tab:entropy_frac}
\begin{tabular}{lcccccc}
\toprule
Shape & $H$ (nats) & $\rho = 0.5$ & 0.7 & 0.8 & 0.9 & 0.95 \\
\midrule
uniform over 2 & 0.69 & \revdel{0.122}\rev{0.125} & \revdel{0.266}\rev{0.272} & \revdel{0.380}\rev{0.385} & 0.550 & \revdel{0.686}\rev{0.677} \\
uniform over 8 & 2.08 & 0.147 & 0.308 & 0.426 & 0.590 & 0.712 \\
uniform over 64 & 4.16 & 0.173 & 0.353 & 0.477 & 0.640 & 0.754 \\
flat Zipf(0.8) & 4.07 & 0.173 & 0.354 & 0.479 & 0.641 & 0.750 \\
$p_{\max} = 0.5$ & 2.42 & 0.170 & 0.350 & 0.475 & 0.638 & 0.750 \\
$p_{\max} = 0.95$ & 0.37 & 0.183 & 0.370 & 0.493 & 0.652 & 0.757 \\
\bottomrule
\end{tabular}
\end{table}

These fractions are characterised on synthetic shapes; on real Qwen3.5-9B distributions we instead measure the copula's efficiency directly (\autoref{app:optimal}), which is the quantity that enters detection.

\section{Algorithm and latency}
\label{app:latency}
\revdel{
This appendix gives one round of \tsii in pseudocode and measures the post-logits latency of each sampler.}

\subsection{One round in pseudocode}
\label{app:latency-alg}

\autoref{alg:sampler} implements one round of speculative decoding with the residual race (\autoref{subsec:residual}); its last line gives decoding without speculation.

\begin{algorithm}[ht]
\caption{One round of \tsii with speculative decoding ($N$ drafts)}
\label{alg:sampler}
\begin{algorithmic}[1]
\Require prefix $x$, keys $k^{(1)}, k^{(2)}$, coupling $\rho$, drafter and target, number of drafts $N$, set $\mathcal W$ of windows raced so far
\For{$i = 1, \dots, N+1$}
\State $\mathbf w_i \gets$ last $k$ tokens of $(x, d_{<i})$
\If{$\mathbf w_i \in \mathcal W$} \State $\tilde R_v \gets$ freshly drawn $\Unif$ values \Comment{window already raced}
\Else \State $G_v^\star \gets \big(\Phi^{-1}(R_v^{(1)}) + \Phi^{-1}(R_v^{(2)})\big)/\sqrt 2$ at $\mathbf w_i$; \quad $\tilde R_v \gets \Phi\big(\rho G_v^\star + \sqrt{1-\rho^2}\,\eta_v\big)$ \Comment{key-only}
\EndIf
\State $E_{i,v} \gets -\ln \tilde R_v$; \quad $\mathcal W \gets \mathcal W \cup \lbrace \mathbf w_i \rbrace$
\If{$i \le N$}
\State $\vec q_i \gets$ drafter's law after $(x, d_{<i})$
\State $d_i \gets \argmin_v E_{i,v}/q_{i,v}$; \quad $\tau_i \gets E_{i,d_i}/q_{i,d_i}$ \Comment{draft race \eqref{eq:draft}}
\EndIf
\EndFor
\State $\vec p_1, \dots, \vec p_{N+1} \gets$ target's laws after $(x, d_{<i})$, in one forward pass
\State $\pi_{i,v} \gets \Prob(I = i)\, r_{i,v}$ from \eqref{eq:stoplaw}; \quad $\pi_{i,d_i} \gets 0$ for $i \le N$ \Comment{block verification}
\State $(I, y) \gets \argmin_{(i,v):\, \pi_{i,v} > 0} (E_{i,v} - \tau_i q_{i,v})/\pi_{i,v}$, with $\tau_{N+1} q_{N+1,v} = 0$ \Comment{residual race \eqref{eq:relay}}
\State \Return $d_{<I}\, y$
\Statex \textit{Without speculative decoding: return $\argmin_v E_v/p_v$ at the window of the next position.}
\end{algorithmic}
\end{algorithm}

In \autoref{alg:sampler}, the masses $\pi_{i,d_i}$ vanish in exact arithmetic (\autoref{lem:zeromass}) but are set to 0 explicitly: the leftover of a draft's own lane is 0, so a rounding residue in its mass would make it win the race.
The key-only work of position $i$ needs the draft $d_{i-1}$, so it runs alongside the drafter's forward \revdel{pass at that position;} \rev{pass;} after the \revdel{target's forward} \rev{target} pass, block verification's law and the relay \revdel{take work} \rev{are} linear in the \rev{round's} $(N+1)\,|\V|$ \revdel{values of the round.} \rev{values.}

\subsection{Post-logits latency}
\label{app:latency-post}

\revdel{This subsection times the post-logits work of each sampler, and the full decoding step of \tsii against that of \ts.}
Only work performed after the logits are known can add latency; key-only work (PRF evaluation, forming $G^\star$ and $\eta$, and textGrain's keyed cost table) depends only on the key and the window, and can therefore run concurrently with the model's forward pass.
\autoref{tab:latency} accordingly times the post-logits work of each sampler on the Qwen3.5 vocabulary, with every method except our full textGrain implementation captured in a CUDA graph.
\tsii costs the same as \ts: a single pass over the vocabulary, cheaper than the top-$p$ filtering that every method (watermarked or not) requires.
For textGrain we \revdel{report} \rev{time} three \revdel{variants of increasing fidelity:} \rev{variants:} a minimal version \revdel{implementing} \rev{with} only the core post-logits work (block masses, a fixed number of Sinkhorn iterations, keyed column selection, and two sampling \revdel{draws)} \rev{draws),} without calibrating the regularisation to the entropy budget; \revdel{the} \rev{our} \revdel{published} \rev{implementation of its} Algorithm A.1 with 10--30 Sinkhorn iterations; and \revdel{our} \rev{the} \revdel{full} \rev{same} \revdel{implementation,} \rev{implementation} \revdel{which additionally calibrates} \rev{with} the regularisation \rev{calibrated} per token so that the budget is met. Any faithful implementation \revdel{must do} \rev{does} at least the minimal version's work.
At batch size 1, \rev{our implementation of} Algorithm A.1 takes 1.1--3.0\,ms per step \revdel{-- 2--6\%} \rev{(2--6\%} of a 50\,ms Qwen3.5-9B decoding \revdel{step} \rev{step,} \revdel{(\autoref{tab:ts12}) --} \rev{\autoref{tab:ts12}),} and \revdel{our} \rev{2.3--4.1\,ms} \revdel{full} \rev{with} \revdel{implementation 2.3--4.1\,ms.} \rev{calibration.} Even the minimal version (0.2--0.4\,ms at batch 1) \revdel{requires} \rev{does} 8--22 times \tsii's post-logits work \revdel{across batch sizes 1, 8 and 64} (\autoref{tab:latency}).

\begin{table}[t!]
\centering
\footnotesize
\caption{Post-logits latency per decoding step (ms, median of 200 CUDA-graph replays, or 50 eager runs for \revdel{our} \rev{calibrated} \revdel{textGrain implementation,} \rev{textGrain,} whose calibration loop has data-dependent control flow; one H200, vocabulary of 248{,}320 tokens, synthetic high-entropy logits). Top-$p$ filtering is shared by all methods. The minimal textGrain version uses 5--20 Sinkhorn iterations on the $256 \times 256$ kernel without calibrating $\lambda$.}
\label{tab:latency}
\begin{tabular}{lcccccc}
\toprule
& & & & \multicolumn{3}{c}{textGrain} \\
\cmidrule(l){5-7}
Batch & Top-$p$ (all) & \ts & \tsii & minimal & A.1, 10 / 20 / 30 it. & \revdel{ours,} \rev{A.1 calibrated,} $\beta = 0.25 / 0.5$ \\
\midrule
1 & 0.16 & 0.020 & 0.018 & 0.22--0.40 & 1.1 / 2.1 / 3.0 & 2.3 / 4.1 \\
8 & 0.72 & 0.030 & 0.025 & 0.28--0.50 & 1.4 / 2.6 / 3.8 & 2.4 / 4.5 \\
64 & 2.14 & 0.15 & 0.12 & 0.98--1.24 & 3.8 / 6.7 / 9.5 & 2.7 / 4.8 \\
\bottomrule
\end{tabular}
\end{table}

\paragraph{Comparison with \ts.}
\tsii needs both keys' PRF values and their Gaussian transforms at every token, whereas \ts needs a single key's PRF values.
\autoref{tab:ts12} times full Qwen3.5-9B decoding steps (Hugging Face forward pass, one H200, real logits). If the extra key-only work is run sequentially after the forward pass, it costs 0.2\,ms at batch size 1 and 2.9\,ms (5\%) at batch size 64; overlapped on a side CUDA stream during the forward pass, \tsii is within 0.3\,ms (0.6\%) of \ts and within 0.9\% of plain unwatermarked sampling.

\begin{table}[ht]
\centering
\footnotesize
\caption{Full decoding step (ms, median), Qwen3.5-9B; overlapped: key-only work on a side CUDA stream.}
\label{tab:ts12}
\begin{tabular}{lccccc}
\toprule
Batch & Plain sampling & \ts & \tsii & \ts, overlapped & \tsii, overlapped \\
\midrule
1 & 50.44 & 50.85 & 51.08 & 50.69 & 50.87 \\
8 & 52.39 & 52.54 & 52.64 & 52.52 & 52.79 \\
64 & 56.46 & 58.43 & 61.34 & 56.81 & 56.93 \\
\bottomrule
\end{tabular}
\end{table}

\section{PRF resolution}
\label{app:prf}

\revdel{This appendix shows that the 13-bit PRF of \ts slightly distorts the race, and that the 53-bit PRF used by \tsii removes the distortion.
}
\ts's PRF returns $(h \bmod 8191)/8191$, i.e.\ values quantised to 13 bits.
The largest attainable value is then $8190/8191$, so $E_v = -\ln R_v$ is bounded away from zero. A rare token can win the race only if its $E_v$ is extremely small relative to $p_v$, and is therefore emitted slightly too rarely: the quantisation measurably distorts the sampler.
Simulating the race at all 8059 positions of 60 Qwen3.5-9B answers, the mean of $-\ln p_x - H(\vec p)$ is $-0.03 \pm 0.12$ millinats per token with continuous (unquantised) uniform values, against $-1.00 \pm 0.14$ millinats with the 13-bit PRF. The bias concentrates at high-entropy positions: it is $-0.3$ millinats at positions with 2--10 candidates after top-$p$ (52\% of positions), but $-42$ millinats at positions with 101--1000 candidates (0.9\%).
\tsii therefore uses a \revdel{PRF} \rev{53-bit} \revdel{with substantially more output bits.} \rev{PRF.}
With the 53-bit PRF (splitmix64; uniformity checked by a Kolmogorov--Smirnov test), \tsii under speculative decoding passes both distortion tests of \autoref{fig:ablation}.
The 13-bit PRF also induces a slight correlation between the two keys' values ($-0.011 \pm 0.001$ on random windows and tokens, and $-0.019 \pm 0.002$ on unwatermarked answers, against $-0.0006$ for splitmix64), so that $G^\star$ has variance 0.98--0.99 rather than 1. With 13 bits, \tsii's mean per-token score was 1.2--1.6\% below the single-key copula's on the two Qwen models at $\rho = 0.9$; with the 53-bit PRF the two agree on all three models at $\rho = 0.8$ and $0.9$ (differences of per-answer mean scores from $-0.006$ to $+0.011$, every 95\% interval containing zero), as they must, since the two samplers coincide in distribution.

\section{Additional results}
\label{app:more}

\revdel{This appendix collects the results that \autoref{sec:experiments} refers to: more keys, a lower temperature, entropy weighting, more models and lengths, robustness, keys with different windows, repeated windows, the number of drafts per round, a second drafter--target pair, math problems under speculative decoding, and code.

\paragraph{Four keys.} With \ts's (13-bit) PRF, merging four keys detects 63\% / 91\% of 100 / 200-token answers, against 60\% / 92\% for two keys: additional keys -- e.g.\ one publishable key per verifier -- cost the full detector no detection power, as \autoref{subsec:copula} predicts.}
\rev{\paragraph{Four keys.} With \ts's (13-bit) PRF, merging four keys detects 63\% / 91\% of 100 / 200-token answers, against 60\% / 92\% for two keys: additional keys (e.g.\ one publishable key per verifier) cost the full detector no detection power, as \autoref{subsec:copula} predicts.}

\paragraph{Temperature 0.7.} At the lower entropies induced by temperature 0.7, \tsii detects 22\% / 56\% of 100 / 200-token answers, against 9\% / 37\% for \ts at a similar Self-BLEU (26.3 vs.\ 25.4); textGrain detects 22\% / 55\% at a Self-BLEU of 27.3. (For this experiment, Self-BLEU uses 16 answers to each of 25 questions, and thresholds are calibrated on 6000 unwatermarked answers.)

\begin{table}[ht]
\centering
\footnotesize
\caption{Key-only vs.\ entropy-weighted detection (\ts's weighted Gamma test): \% of 100 / 200-token answers detected at a false-positive rate of $10^{-3}$, with thresholds calibrated per test over 10 keys and 10{,}000 unwatermarked answers. First 1000 answers per configuration without speculative decoding; last block: all 3000 answers with speculative decoding (8 drafts, block verification). Strengths: $\alpha = 0.5$ (\ts), $\beta = 0.5$ (textGrain), $\rho = 0.9$ (copulas, 53-bit PRF). For SynthID-Text, the key-only test is its official binomial test and the weighted test is the same Gamma test applied to twice its mean coin score. Self-BLEU over 32 answers to each of 100 questions. Weighting improves every method by 4--23 percentage points at 200 tokens, and the two copulas remain level without speculative decoding, as they must in distribution.}
\label{tab:entropy}
\begin{tabular}{llccc}
\toprule
Model & Method & Self-BLEU & Key-only & Entropy-weighted \\
\midrule
\multirow{4}{*}{Llama-3.1-8B} & \ts & 23.9 & 53 / 82 & 58 / 91 \\
 & textGrain & 24.1 & 55 / 86 & 67 / 94 \\
 & AutoMark copula & 23.9 & 63 / 90 & 71 / 95 \\
 & \tsii & 23.9 & 64 / 90 & 71 / 95 \\
\midrule
\multirow{5}{*}{Qwen3.5-9B} & \ts & 20.6 & 42 / 82 & 59 / 93 \\
 & textGrain & 20.9 & 65 / 91 & 74 / 97 \\
 & SynthID-Text (10 layers) & 20.9 & 54 / 92 & 72 / 98 \\
 & AutoMark copula & 20.9 & 64 / 93 & 77 / 97 \\
 & \tsii & 20.7 & 63 / 91 & 78 / 97 \\
\midrule
\multirow{4}{*}{Qwen3.5-27B} & \ts & 26.5 & 21 / 53 & 32 / 76 \\
 & textGrain & 27.2 & 37 / 71 & 44 / 84 \\
 & AutoMark copula & 28.2 & 37 / 73 & 48 / 87 \\
 & \tsii & 28.5 & 36 / 71 & 47 / 87 \\
\midrule
\multirow{6}{*}{\shortstack[l]{Qwen3.5-9B,\\spec.\ dec.}} & \ts & 20.1 & 31 / 71 & 48 / 87 \\
 & textGrain, second key & 18.9 & 24 / 59 & 36 / 76 \\
 & textGrain, own recipe$^\ast$ & 20.6 & 59 / 89 & 69 / 96 \\
 & SynthID-Text (10 layers) & 18.4 & 14 / 41 & 23 / 63 \\
 & AutoMark copula, second key & 19.0 & 23 / 55 & 36 / 78 \\
 & \tsii, residual race & 19.1 & 55 / 87 & 67 / 95 \\
\bottomrule
\end{tabular}
\\[2pt]\parbox{\linewidth}{\raggedright\scriptsize $^\ast$Emits 5\% fewer tokens per round; under speculative decoding, Self-BLEU over 16 answers per question.}
\end{table}

\begin{figure}[b!]
\centering
\includegraphics[width=\linewidth]{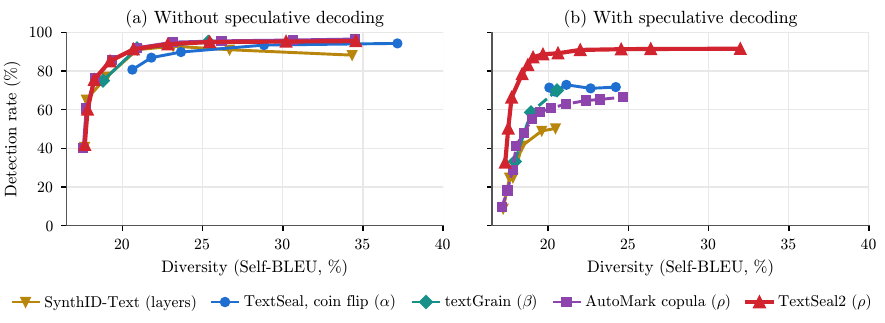}
\caption{The diversity--detection trade-off of \autoref{fig:teaser} at 200 tokens instead of 100; same protocol and conventions, including the speculative panel (b).}
\label{fig:pareto200}
\end{figure}

\begin{figure}[ht]
\centering
\includegraphics[width=\linewidth]{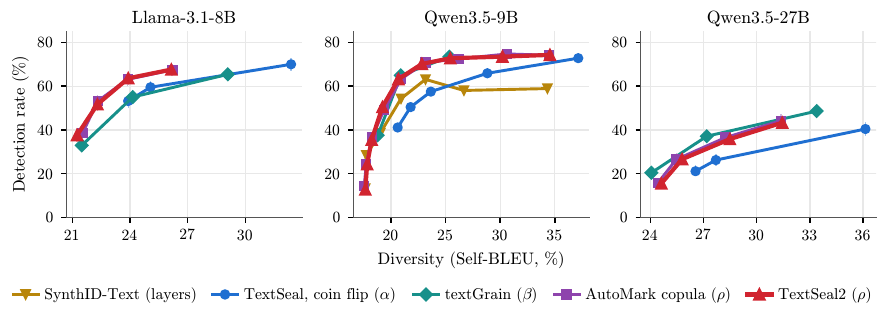}
\caption{Diversity (Self-BLEU; lower is more diverse) vs.\ detection of 100-token answers without speculative decoding, on three models (32 answers to each of 100 questions; copulas with the 53-bit PRF). Interpolated at \ts's Self-BLEU, \tsii gains 10, 22 and 8 percentage points over the coin flip on Llama-3.1-8B, Qwen3.5-9B and Qwen3.5-27B respectively, and AutoMark's copula overlaps \tsii throughout.}
\label{fig:three_models}
\end{figure}

\begin{table}[ht]
\centering
\footnotesize
\caption{Robustness: \% of 1000 200-token Qwen3.5-9B answers still detected after each attack, at the clean thresholds ($10^{-3}$ false-positive rate, 10 keys; paraphraser: \mbox{Llama-3.2-3B-Instruct}).}
\label{tab:robust}
\begin{tabular}{lccccc}
\toprule
Method & Clean & 10\% subst. & 20\% subst. & 30\% var. & Paraphr. \\
\midrule
TextSeal (coin flip, $\alpha = 0.5$) & 82 & 46 & 16 & 28 & 3 \\
textGrain ($\beta = 0.5$) & 91 & 63 & 24 & 48 & 5 \\
AutoMark's copula ($\rho = 0.9$) & 92 & 66 & 26 & 48 & 5 \\
\tsii ($\rho = 0.9$) & 92 & 64 & 27 & 46 & 6 \\
\tsii ($\rho = 0.95$) & 94 & 71 & 31 & 51 & 7 \\
\bottomrule
\end{tabular}
\end{table}

\paragraph{Keys with different windows.}
The virtual key can merge keys seeded by windows of different lengths. Let $F_t$ be the set of keys whose window at position $t$ has not been used before in the answer; then $G^\star_v = \sum_{j \in F_t} G_{j,v}/\sqrt{|F_t|}$ is still standard normal, so the race remains distortion-free with any such mix.
This enables edit-rate estimation without access to the original text\revdel{~\citep{milis2026public}}: under token substitutions at rate $e$, a window of length $L$ \revdel{survives} \rev{and the token it seeds survive} intact with probability \revdel{approximately} \rev{about} \revdel{$(1-e)^L$,} \rev{$(1-e)^{L+1}$~\citep{milis2026public},} so the mean excess score of the length-$L$ key decays geometrically in $L$, at a rate that identifies \revdel{$e$.} \rev{$e$; the least-squares slope of the log excess against $L+1$ estimates $\ln(1-e)$.}
\revdel{Merging} \rev{This} \revdel{is} \rev{application} \revdel{essential} \rev{needs} \revdel{for this application.} \rev{merging.} Under \ts's coin-flip routing, each of $M$ keys generates only a $1/M$ share of the tokens, and detection collapses as keys are added: a ladder of window lengths $1, \dots, 10$ detects just 22\% of 400-token Llama-3.1-8B-Instruct answers. Merged into $G^\star$, each key instead retains a \revdel{$1/\sqrt M$}\rev{$\rho/\sqrt M$} correlation with every token's race, and the same ladder detects 98\%; on Qwen3.5-9B, multi-window variants match two-key \tsii at equal Self-BLEU on clean text (\autoref{tab:multiwin}).
Pooled over a corpus, the resulting estimate of $e$ is unbiased up to 20\% token substitutions; for a single 200-token answer it remains informative up to about 5\%.
The cost is robustness: long windows are broken by edits first, and under 8--12\% substitutions the ladder's detection falls by 15--27 percentage points relative to fixed short windows.

\begin{table}[ht]
\centering
\footnotesize
\caption{Keys with different window lengths merged in the virtual key (Qwen3.5-9B, ELI5, 200 tokens; $\rho = 0.9$ for the robustness and estimation columns; 53-bit PRF). Rows: standard two-key \tsii with windows $(3,3)$; two windows $(2,10)$; and a six-key ladder $(2,3,4,6,8,10)$. Clean: \% of 100 / 200-token answers detected at a false-positive rate of $10^{-3}$, interpolated at Self-BLEU 21.5. Substitution columns: \% of 200-token answers still detected, at the clean thresholds, after a fraction $e$ of tokens is replaced uniformly at random. Estimate columns: edit rate fitted from the decay of per-key scores with window length, reported as (pooled fit over 1000 answers) / (median of per-answer fits).}
\label{tab:multiwin}
\setlength{\tabcolsep}{4pt}
\begin{tabular}{lcccccc}
\toprule
Windows & Clean & $e = 0.05$ & $e = 0.12$ & $e = 0.2$ & Estimate at $e = 0.05$ & Estimate at $e = 0.12$ \\
\midrule
$(3,3)$ & 68 / 92 & 83 & 58 & 27 & -- & -- \\
$(2,10)$ & 70 / 94 & 74 & 31 & 10 & 0.045 / 0.043 & 0.116 / 0.087 \\
ladder & 67 / 94 & 75 & 33 & 9 & 0.045 / 0.037 & 0.117 / 0.059 \\
\bottomrule
\end{tabular}
\end{table}

\paragraph{Repeated windows in long answers.}
In long-form prose, repeated windows are rare: on 1024-token Qwen3.5-9B answers (prompts requesting at least 800 words; 300 answers per scheme; thresholds calibrated on 2000 unwatermarked long answers), only 2.3\% of positions are a window's second occurrence. Every scheme \revdel{-- \ts,} \rev{(\ts,} \tsii with one use per window, and AutoMark's \revdel{copula --} \rev{copula)} then detects 98--100\% of answers within 256 tokens and 100\% from 512 tokens (at a 1\% false-positive rate), so the \revdel{generation} \rev{repeated-window} rule \revdel{for} \rev{does} \revdel{repeated} \rev{not} \revdel{windows is immaterial} \rev{matter} there.
Windows repeat far more often in code (11\% of positions, \autoref{tab:code}) and in thinking traces (16--34\%, \autoref{tab:repeat}), \revdel{which is} where leftover continuation helps.

\begin{table}[ht]
\centering
\footnotesize
\caption{Number of drafts per round: Qwen3.5-0.8B drafting $N$ tokens for Qwen3.5-9B, block verification, every configuration generated separately for each $N$ (3000 answers for detection, 100 questions $\times$ 16 answers for Self-BLEU). Tokens per round: emitted tokens over target passes, at $\alpha = 0.5$, $\beta = 0.5$, 10 layers and $\rho = 0.9$. Detected: \% of 100 / 200-token answers at a false-positive rate of $10^{-3}$, interpolated at \ts's Self-BLEU for the same $N$ (19.8 / 20.1 / 19.6). Paired over 480 prompts, only textGrain's own recipe lowers the tokens per round ($-0.08$ / $-0.14$ / $-0.18$); the distortion tests of \autoref{fig:ablation} detect the raw-lane-length control at every $N$, and no other scheme on the tests that apply to it, except that at $N = 16$ the residual race and textGrain's own recipe both sit 7 millinats below the same unwatermarked run in the all-token test (intervals $[-13, -1]$), which points to noise in that shared reference.}
\label{tab:ndrafts}
\setlength{\tabcolsep}{4pt}
\begin{tabular}{lcccccc}
\toprule
& \multicolumn{2}{c}{$N = 4$} & \multicolumn{2}{c}{$N = 8$} & \multicolumn{2}{c}{$N = 16$} \\
\cmidrule(lr){2-3} \cmidrule(lr){4-5} \cmidrule(lr){6-7}
Method & Tokens / round & Detected & Tokens / round & Detected & Tokens / round & Detected \\
\midrule
No watermark & 2.44 & -- & 2.74 & -- & 2.85 & -- \\
\ts (coin flip) & 2.46 & 34 / 73 & 2.77 & 31 / 71 & 2.87 & 30 / 68 \\
textGrain & 2.45 & 30 / 65 & 2.75 & 31 / 67 & 2.84 & 28 / 62 \\
textGrain, own recipe & 2.36 & 48 / 82 & 2.61 & 52 / 84 & 2.68 & 51 / 85 \\
SynthID-Text & 2.44 & 19 / 48 & 2.75 & 17 / 50 & 2.85 & 17 / 46 \\
AutoMark copula & 2.44 & 24 / 57 & 2.73 & 25 / 60 & 2.83 & 23 / 57 \\
\tsii, residual race & 2.44 & \textbf{58} / \textbf{88} & 2.74 & \textbf{59} / \textbf{89} & 2.85 & \textbf{53} / \textbf{86} \\
\bottomrule
\end{tabular}
\end{table}

\begin{table}[ht]
\centering
\footnotesize
\caption{Second drafter--target pair: Llama-3.2-1B-Instruct drafting $N = 4$ or $8$ tokens per round for Llama-3.1-8B-Instruct, with block verification (ELI5, 240 prompts per row, $\alpha = 0.5$, $\rho = 0.9$, $\beta = 0.5$, \ts's 13-bit PRF for every scheme). Tokens per round: emitted tokens over target passes. Detection: \% of 100 / 200-token answers at a false-positive rate of $10^{-3}$, thresholds over 10 keys and 10{,}000 unwatermarked Llama-3.1-8B-Instruct answers; 95\% Wilson intervals are about $\pm 6$ points.}
\label{tab:llamapair}
\resizebox{\linewidth}{!}{\begin{tabular}{lccc}
\toprule
Method & Tokens per round ($N = 4$ / $8$) & Detected, $N = 4$ & Detected, $N = 8$ \\
\midrule
No watermark & 3.20 / 4.08 & -- & -- \\
\ts (coin flip, other key for the last token) & 3.21 / 4.09 & 36 / 72 & 37 / 62 \\
textGrain, other key for the last token & 3.21 / 4.14 & 20 / 51 & 18 / 43 \\
textGrain, own recipe (lowers acceptance) & 3.11 / 3.88 & 55 / 83 & 50 / 79 \\
\tsii, residual race & 3.20 / 4.11 & \textbf{63} / \textbf{87} & \textbf{55} / \textbf{81} \\
\bottomrule
\end{tabular}}
\end{table}

\begin{table}[ht]
\centering
\footnotesize
\caption{AIME under speculative decoding: Qwen3.5-0.8B drafting 8 tokens per round for Qwen3.5-9B with block verification, 16 answers to each of 100 AIME problems (thinking disabled, prompt asking for a \texttt{\textbackslash boxed\{\}} answer, temperature 1, top-$p$ 0.95, up to 8192 tokens), at $\alpha = 0.5$, $\beta = 0.5$, 10 layers and $\rho = 0.9$. Tokens per round: emitted tokens over target passes; only textGrain's own recipe lowers it ($-0.11$ $[-0.14, -0.09]$, paired by problem with unwatermarked decoding). Accuracy: last \texttt{\textbackslash boxed\{\}} answer; paired by problem, every scheme is within 2 points of unwatermarked decoding, all 95\% intervals containing zero. Self-BLEU: over each problem's 16 answers, first 1024 tokens. Finished: answers that end before 8192 tokens. Detected: \% of answers detected at a false-positive rate of $10^{-3}$ on their first 256 / 512 / 1024 / 2048 tokens, with each method's detector; thresholds are the empirical $10^{-3}$ quantile over the 1600 unwatermarked answers scored under 10 keys ($1.6 \cdot 10^4$ null p-values; realised false-positive rates 0.05--0.15\%); 95\% Wilson intervals are at most $\pm 2.4$ points. Strengths are defaults, not diversity-matched; Self-BLEU \revdel{stays} \rev{varies} \revdel{within} \rev{by at most} 2 \revdel{points across schemes.} \rev{points.}}
\label{tab:aime}
\setlength{\tabcolsep}{4pt}
\begin{tabular}{lcccccccc}
\toprule
& & & & & \multicolumn{4}{c}{Detected (\%) at} \\
\cmidrule(l){6-9}
Method & Tokens / round & Accuracy (\%) & Self-BLEU & Finished (\%) & 256 & 512 & 1024 & 2048 \\
\midrule
No watermark & 5.42 & 71.1 & 44.8 & 68 & -- & -- & -- & -- \\
\ts (coin flip) & 5.43 & 69.2 & 46.1 & 66 & 1 & 5 & 13 & 22 \\
textGrain & 5.39 & 69.3 & 45.8 & 66 & 1 & 2 & 6 & 15 \\
textGrain, own recipe & 5.30 & 69.7 & 46.8 & 68 & 3 & 8 & 22 & 38 \\
SynthID-Text & 5.39 & 69.2 & 45.6 & 68 & 2 & 4 & 9 & 19 \\
AutoMark copula & 5.38 & 70.2 & 45.8 & 68 & 1 & 2 & 4 & 7 \\
\tsii, residual race & 5.39 & 70.6 & 46.5 & 68 & \textbf{4} & \textbf{11} & \textbf{24} & \textbf{41} \\
\bottomrule
\end{tabular}
\end{table}

\begin{figure}[ht]
\centering
\includegraphics[width=\linewidth]{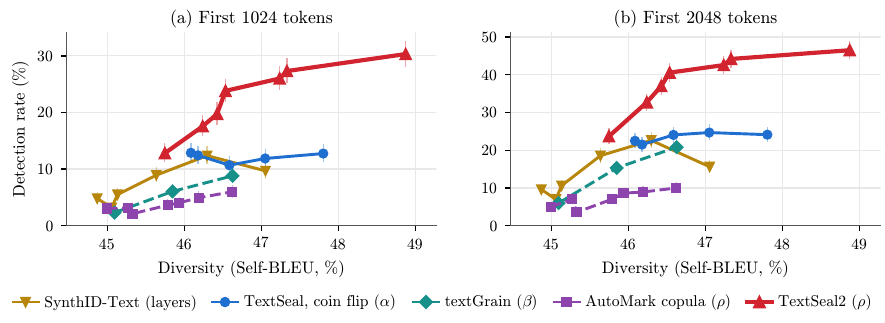}
\caption{Diversity (Self-BLEU over each problem's 16 answers, first 1024 tokens; lower is more diverse) vs.\ detection at a false-positive rate of $10^{-3}$ on AIME under speculative decoding, on the first 1024 (a) and 2048 (b) tokens; same setup as \autoref{tab:aime}, sweeping each method's strength. Interpolated at \ts's Self-BLEU (46.1), \tsii detects 30\% of answers at 2048 tokens, against 23\% for the coin flip, 21\% for SynthID-Text, 17\% for textGrain and 9\% for AutoMark's copula; paired by problem, its lead is significant over all but the coin flip ($+7.4$ $[-0.2, +19.5]$). Error bars: 95\% Wilson intervals.}
\label{fig:aime}
\end{figure}

\begin{table}[ht]
\centering
\footnotesize
\caption{Code: Qwen3.5-9B writing Python solutions to the 500 MBPP test problems, up to 512 new tokens (most answers end earlier, so detection saturates after 256 tokens). \% of answers detected at 128 / 256 / 512 tokens, thresholds at a 1\% false-positive rate over 10 keys and 2922 unwatermarked answers (all 974 MBPP problems, 3 seeds); same settings as on ELI5 ($\alpha = 0.5$, $\rho = 0.9$, $\beta = 0.5$, 10 SynthID layers; the copulas use the 53-bit PRF), whose Qwen3.5-9B Self-BLEU is 20.6--20.9 for all five; one answer per problem, so code diversity is not measured. 11\% of positions are a window's second occurrence in the text.}
\label{tab:code}
\begin{tabular}{lc}
\toprule
Method & Detected (128 / 256 / 512 tokens) \\
\midrule
\ts (coin flip) & 30 / 39 / 42 \\
textGrain & 38 / 46 / 48 \\
SynthID-Text & 42 / 50 / 50 \\
AutoMark copula & \textbf{54} / \textbf{59} / 59 \\
\tsii, one use per window & 47 / 55 / \textbf{60} \\
\bottomrule
\end{tabular}
\end{table}

\begin{table}[ht]
\centering
\footnotesize
\caption{Repeated windows in long generations: Qwen3.5-9B with thinking traces, up to 2048 tokens (ELI5: 1024 prompts, MBPP: 500 problems; $\rho = 0.9$, 53-bit PRF). A window repeats at 16\% of positions on ELI5 and 34\% on MBPP. Detection at a false-positive rate of $10^{-3}$ (thresholds over 10 keys and 2048 / 1000 unwatermarked traces) and median z-score of the key-only test; both rules use the same detector. $\Delta$: paired difference against one use per window, with 95\% bootstrap interval over prompts. Distortion check, on 480 traces generated with one key per answer: the mean of $-\ln p_x - H(\vec p)$ at a window's 2nd / 3rd+ occurrences is $+3$ $[-6, +13]$ / $-3$ $[-8, +2]$ millinats with leftover continuation, and $-3$ $[-12, +7]$ / $+2$ $[-3, +6]$ millinats without any watermark, so leftover continuation introduces no measurable distortion.}
\label{tab:repeat}
\setlength{\tabcolsep}{4pt}
\begin{tabular}{lcccc}
\toprule
& \multicolumn{2}{c}{ELI5} & \multicolumn{2}{c}{MBPP} \\
Repeated windows & Detected @256 & Median z @256 / 1024 / 2048 & Detected @512 & Median z @1024 / 2048 \\
\midrule
One use per window & 93.0 & 6.0 / 15.9 / 22.8 & 53.2 & 4.8 / 5.6 \\
Leftover continuation & \textbf{96.8} & \textbf{6.5} / \textbf{16.9} / \textbf{23.9} & \textbf{60.8} & \textbf{5.2} / \textbf{5.7} \\
\midrule
$\Delta$ leftover continuation & $+3.8$ {\tiny$[+2.1, +5.6]$} & $+0.5$ / $+0.9$ / $+1.1$ & $+7.6$ {\tiny$[+2.4, +12.8]$} & $+0.3$ / $+0.4$ \\
\bottomrule
\end{tabular}
\end{table}

\section{Experimental details}
\label{app:details}

\revdel{This appendix gives the settings of every experiment: generation, speculative decoding, detection, diversity, distortion tests, compute, and our reimplementation of textGrain.
}
\paragraph{Generation.}
Prompts are \revdel{the first rows of the} ELI5 \revdel{dataset,} \rev{questions (a version without math questions, 267{,}901 rows), taken in order and} formatted with each model's chat template with thinking disabled; decoding uses temperature 1, top-$p$ 0.95, and at most 200 new tokens, with watermark window size $k = 3$. The copulas (\tsii and AutoMark's) use the 53-bit PRF of \autoref{app:prf} in \autoref{tab:main}, Figures~\ref{fig:teaser}, \ref{fig:three_models}, \ref{fig:pareto200} and~\ref{fig:distinct4}, and Tables~\ref{tab:entropy}, \ref{tab:code} and~\ref{tab:repeat}, as well as in every speculative-decoding experiment on Qwen3.5-9B; the remaining experiments (robustness, four keys, and the Llama drafter--target pair of \autoref{tab:llamapair}) use \ts's 13-bit PRF, which slightly understates \tsii (\autoref{app:prf}).
Detection runs use 3000 answers per configuration on \revdel{Qwen3.5-9B,} \rev{Qwen3.5-9B (rows 0--2999, one answer each),} with or without speculative decoding, and 1000 on the other \revdel{models;} \rev{models (rows 0--999);} null calibration uses 10{,}000 unwatermarked answers from the same model \revdel{(the} \rev{(rows 0--9999; rows 100{,}000--109{,}999 for the speculative-decoding experiments, with the} target \revdel{model under speculative decoding).} \rev{model).}
\revdel{Key values follow \ts's derivation: $k^{(1)} = 42$ and $k^{(2)} = 42 + 12345$; textGrain uses $B = m = 256$ and keys 42 and 4242. For null calibration, $10^5 j$ is added to every key for $j = 1, \dots, 9$, giving ten keys in total including the generation key.}
\rev{Key values follow \ts's derivation: $k^{(1)} = 42$ and $k^{(2)} = 42 + 12345$; textGrain uses $B = m = 256$ and keys 42 and 4242. For null calibration, adding $10^5 j$ to every key, $j = 0, \dots, 9$, gives ten keys, $j = 0$ being the generation key.}

\paragraph{Speculative decoding.}
In each round, Qwen3.5-0.8B proposes $N$ drafts, one at a time, from its own distribution (after temperature and top-$p$) through the scheme's sampler; Qwen3.5-9B then evaluates $\vec p_1, \dots, \vec p_{N+1}$ in a single forward pass, and the round emits the kept drafts and one more token, by block verification~\citep{sun2025block} with the closed forms of Appendix~\ref{app:proofs-block}.
A draft that is an end-of-sequence token ends the round's drafting, so the round then has fewer than $N$ drafts.
\tsii follows \autoref{alg:sampler}: drafts by the race \eqref{eq:draft}, then the residual race \eqref{eq:relay}, with the copula noise $\eta$ drawn once per position and read by both races.
The unwatermarked reference is native block verification with freshly drawn randomness.
$N = 8$ is the main setting; \autoref{tab:ndrafts} reports $N = 4$ and $16$, for which every configuration was generated again.
During generation, each answer tracks the (window, key) pairs whose values have been used, and every pair that a round reads counts as used, \revdel{whether} \rev{kept} or \revdel{not} \rev{not,} \revdel{its} \rev{for} \revdel{position} \rev{every} \revdel{is} \rev{scheme} \revdel{kept (Appendix~\ref{app:proofs-windows}); this rule applies to all schemes.} \rev{(Appendix~\ref{app:proofs-windows}).}
When a window has already been used, \ts and SynthID-Text fall back to their other key and then to fresh randomness, and textGrain and the copulas directly to fresh randomness (for \tsii, freshly drawn lane lengths, on which the draft race and the residual race still run).
Both models keep a key--value cache that is rolled back to the kept prefix after each round; on 40 random rounds, the logits after a rollback match a recomputation without cache as closely as those of a cache without rollback (bfloat16 noise).

\paragraph{Detection.}
Under speculative decoding with 8 drafts, the last token of each round makes up 36--37\% of the text (first 1000 answers at $\alpha = 0.5$ and $\rho = 0.9$). Scored under the draft key, AutoMark's last tokens, drawn under an independent key except at the bonus position (5\% of rounds), average 0.84 \revdel{-- below} \rev{(below} the null mean of 1, because a token drawn from block verification's residual law lost its draft \revdel{race --} \rev{race),} against 1.42 for its kept drafts; \tsii's last tokens average 1.34 under $s^\star$, against 1.42 for its kept drafts (all $\pm 0.004$). Positions whose window was already used get fresh values: 9\% of \tsii's tokens, against 6\% of AutoMark's.
For every detector, each (window, token) pair is scored once, at its first occurrence.
\revdel{\emph{Entropy-weighted test.} Each scored position $t$ is assigned the entropy $e_t$ of the next-token distribution of an auxiliary model -- here the generating model itself (the target model under speculative decoding), evaluated on the answer under examination -- and the weight $w_t = 0.1 + 0.9 \min\big(1, \max(0, (e_t - e_{\min})/(e_{\max} - e_{\min}))\big)$, where $e_{\min}$ and $e_{\max}$ range over the scored positions of the text. With per-position scores $s_t$ of null mean 1 and null variance $\theta$ -- $\theta = 1$ for single-key scores, and $\theta = a^2 + (1-a)^2$ for two-key fused scores $a\,s^{(1)} + (1-a)\,s^{(2)}$ -- the weighted sum $\sum_t w_t s_t$ is compared with the Gamma distribution of the same mean $\mu = \sum_t w_t$ and variance $\sigma^2 = \theta \sum_t w_t^2$ (shape $\mu^2/\sigma^2$, scale $\sigma^2/\mu$). Since all decision thresholds are empirical (below), this moment matching affects power but not validity.}
\rev{\emph{Entropy-weighted test.} Each scored position $t$ is assigned the entropy $e_t$ of the next-token distribution of an auxiliary model (here the generating model itself, the target model under speculative decoding, evaluated on the answer under examination) and the weight $w_t = 0.1 + 0.9 \min\big(1, \max(0, (e_t - e_{\min})/(e_{\max} - e_{\min}))\big)$, where $e_{\min}$ and $e_{\max}$ range over the scored positions of the text. With per-position scores $s_t$ of null mean 1 and null variance $\theta$ ($\theta = 1$ for single-key scores, $\theta = a^2 + (1-a)^2$ for two-key fused scores $a\,s^{(1)} + (1-a)\,s^{(2)}$), the weighted sum $\sum_t w_t s_t$ is compared with the Gamma distribution of the same mean $\mu = \sum_t w_t$ and variance $\sigma^2 = \theta \sum_t w_t^2$ (shape $\mu^2/\sigma^2$, scale $\sigma^2/\mu$). Since all decision thresholds are empirical (below), this moment matching affects power but not validity.}
The threshold for each detector and answer length is the empirical $10^{-3}$ quantile of its p-values on the 10{,}000 unwatermarked answers scored under 10 keys. Keys are interchangeable on unwatermarked text, so this pools $10^5$ null p-values, about 100 of which fall below the threshold; a single key would give only about 10, and hence a much noisier threshold. At the nominal $10^{-3}$ level, realised false-positive rates are 0.08--0.13\% for the detectors of \autoref{fig:calib} and 0.07--0.15\% for those used under speculative decoding (two-key fusions with the measured shares, and the Llama pair), so the Gamma and binomial tail approximations are already accurate, and the empirical thresholds absorb the remaining deviation.
Under speculative decoding, \ts, textGrain, SynthID-Text and AutoMark's copula fuse both keys' scores, weighted by their measured shares of emitted tokens (as \ts does for its coin flip); textGrain's own recipe uses a single key; \tsii uses $s^\star$ (private) and $s^{(1)}$ (public) exactly as without speculative decoding (\autoref{subsec:copula}).

\begin{figure}[ht]
\centering
\includegraphics[width=\linewidth]{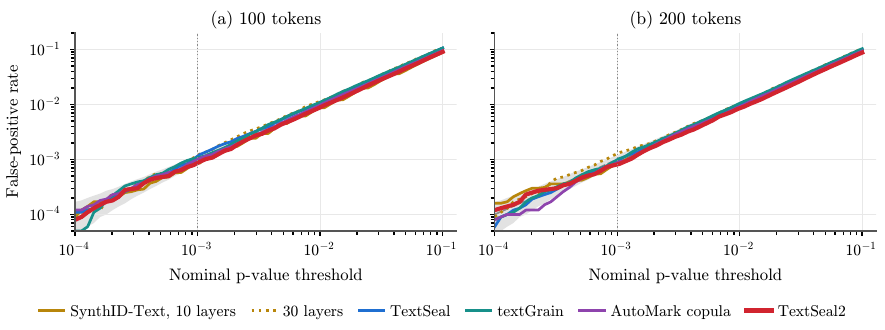}
\caption{Calibration: false-positive rate against the nominal p-value threshold for the detectors of \autoref{tab:main} without speculative decoding, on 10 keys and 10{,}000 unwatermarked Qwen3.5-9B answers; grey: 95\% binomial interval around the diagonal. At \revdel{thresholds of} $10^{-3}$ and above, every detector \revdel{stays} \rev{is} within a factor 1.3 of \revdel{the nominal rate.} \rev{nominal.}}
\label{fig:calib}
\end{figure}

\paragraph{Diversity.}
The Self-BLEU of a question is the mean sentence-level BLEU (sacrebleu) over all ordered pairs of its answers generated under the fixed key, decoded from their first 200 tokens. This is \ts's two-generation diversity estimator averaged over all answer pairs; we report the mean over questions. Intervals for values interpolated at \ts's Self-BLEU resample the 100 questions jointly for all methods (and detection outcomes binomially within questions), so that question-level variation common to all methods cancels in paired comparisons.
\autoref{fig:distinct4} repeats \autoref{fig:teaser} with diversity measured by the share of distinct 4-grams among a question's answers. The copulas still dominate the coin flip, and \tsii still leads under speculative decoding; without speculative decoding, textGrain is slightly ahead of the copulas under this metric (66\% vs.\ 62\% detected at \ts's distinct-4 level), whereas it is level with \tsii under Self-BLEU. \revdel{We use Self-BLEU as the primary metric for comparability with \ts.}

\begin{figure}[ht]
\centering
\includegraphics[width=\linewidth]{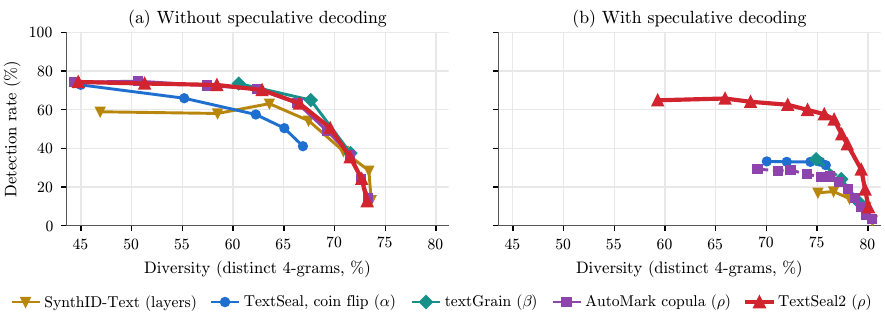}
\caption{\autoref{fig:teaser} with diversity measured by the share of distinct 4-grams (higher is more diverse), including the speculative panel (b).}
\label{fig:distinct4}
\end{figure}

\paragraph{Distortion tests.}
Answers used for acceptance and distortion measurements are generated with a fresh key $10^7 + 7919\,i$ for prompt $i$ (textGrain: this key and the key plus 4200), so that answers are independent draws \revdel{when averaged} over \revdel{keys --} \rev{keys,} the regime in which distortion-freeness is defined; 480 prompts per scheme and draft length.
The tests read the laws used at generation: for every emitted token, $p_x$ and $H(\vec p)$ come from the target pass of its round, and for the last token of each round, $\vec r_I$ is block verification's residual law at its position, computed from these $\vec p$ and $\vec q$.
All intervals are 95\% bootstrap intervals over answers.
\autoref{fig:ablation} reports the acceptance and distortion measurements for $N = 8$, and \autoref{tab:ndrafts} summarises them for $N = 4$ and $16$.
As a simulation control, the decisions were run on fixed toy laws with fresh values for every round ($N = 3$ drafts). The outcome $(d, I, y)$ of the residual race follows block verification's law ($\chi^2$ test over $6 \cdot 10^4$ rounds and 5 tokens, $p = 0.15$), whereas racing on the raw lane lengths (\autoref{rem:naive}) does not ($\chi^2 = 3194$ on 401 degrees of freedom). For the baselines ($6 \cdot 10^3$ rounds over 12 tokens), the first emitted token follows the target's law and the number of kept drafts follows native block verification ($p \ge 0.13$), except that textGrain's own recipe keeps fewer drafts, as it verifies the watermarked laws.

\begin{figure}[ht]
\centering
\includegraphics[width=0.88\linewidth]{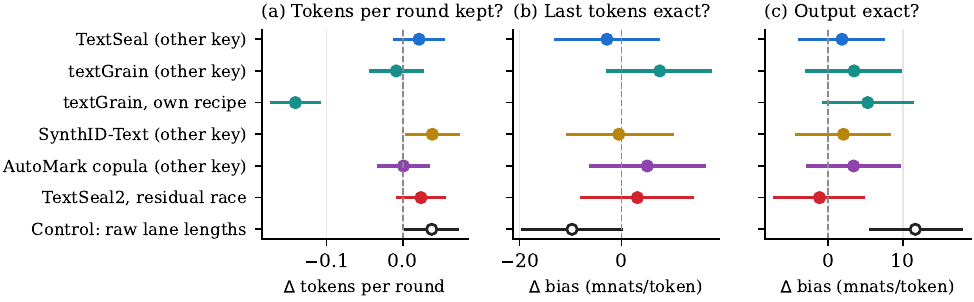}
\caption{Is speculative decoding with a watermark still distortion-free? Paired differences from unwatermarked block verification (8 drafts per round) on the same 480 prompts, with one key per answer (95\% paired bootstrap intervals). (a) Tokens per round. (b) Mean of $-\ln r_{I,y} - H(\vec r_I)$ at the last token $y$ of each round, which has expectation zero when it follows block verification's residual law $\vec r_I$ (not shown for textGrain's own recipe, whose last token follows the residual of the watermarked distributions). (c) Mean of $-\ln p_x - H(\vec p)$ over all tokens, which has expectation zero for any distortion-free scheme. Only the control \revdel{-- the residual race} \rev{(racing} on \revdel{the} raw lane \revdel{lengths} \rev{lengths,} \revdel{instead of the leftovers (\autoref{rem:naive}) --} \rev{\autoref{rem:naive})} deviates \revdel{detectably} in (c), and only textGrain's own recipe lowers \revdel{the} tokens per round.}
\label{fig:ablation}
\end{figure}

\paragraph{Compute.}
Each generation job uses one H200 GPU. Our speculative-decoding implementation (Hugging Face models with a key--value cache) is unbatched and unoptimised: with $N = 8$ it takes about 0.11\,s per token on one GPU, three times slower than plain decoding with the target, because the drafter is called once per draft; this affects only experimental throughput, not any reported rate, since block verification is exact. Under it, the residual race's decision takes 0.33\,ms per token, against 0.38\,ms for native block verification, and computing the keyed lane lengths of the drafted positions takes 2.4\,ms per token more than sampling the drafts without a key; this key-only work can run concurrently with the drafter's forward pass (\autoref{app:latency}).

\paragraph{textGrain reimplementation.}
The textGrain report~\citep{li2026textgrain} releases no code, so we implemented its Algorithms 1 and A.1 from the text: a keyed partition of the vocabulary into $B = 256$ blocks derived from the key and the window; a $256 \times 256$ table of standardised keyed Gumbel costs with $m = 256$ columns\rev{ (the report fixes neither $B$ nor $m$)}; the log-domain Sinkhorn updates with marginal correction and the regularisation update of its Eq.~(A.4); selection of a keyed column; and independent sampling of the block and then of the token within it. Detection follows its Section 3.2 ($\mathrm{Gamma}(n,1)$ null), except \revdel{that --} \rev{that,} as for every method in this \revdel{paper --} \rev{paper,} we score the first occurrence of each (window, token) pair rather than each window's first occurrence, which adds about 1\% more scored positions.
We validated the GPU kernels against a direct (unoptimised) implementation of the algorithm on random inputs, and verified on toy distributions that the emitted-token distribution is unbiased and that the entropy removed respects the budget $\beta$.
The speculative recipe implements its Appendix B with block verification between the watermarked distributions. OpenAI's implementation and hyperparameters may differ from ours, and our textGrain numbers carry that caveat.

\paragraph{Reproducibility.}
Generation and evaluation scripts, including fixed seeds, key derivations, prompt lists (ELI5 rows; MBPP problems) and threshold calibration, together with the per-answer outputs underlying every reported number, will be released with the paper.

\paragraph{Broader impact.}
Watermarks give providers, platforms and the public one signal for tracing AI-generated text. They are not robust to paraphrasing (\autoref{tab:robust}), can in principle be spoofed or removed by determined adversaries, and should therefore be combined with complementary provenance signals rather than treated as proof of origin.
Publishing a detection key enables independent public verification, but also gives adversaries a target signal to attack; retaining a private key allows such manipulation to be detected~\citep{milis2026public}.

\end{document}